\theoremstyle{plain} % ********************
\newtheorem{assumptionR}{Assumption}
\newtheorem{assumptionS}{Assumption}
\newtheorem{lemmaC}{Lemma}
\newtheorem{theorem}{Theorem}
\newtheorem{corollary}{Corollary}
\newtheorem{corollaryS}{Corollary}
\newtheorem{proposition}{Proposition}
\theoremstyle{definition} % ***************
\newtheorem{definition}{Definition}
\newtheorem{remark}{Remark}
\newtheorem{example}{Example}
\theoremstyle{remark} % *******************
\let\int\undefined
\DeclareSymbolFont{otherlargesymbols}{OMX}{cmex}{m}{n}
\DeclareMathSymbol{\int}{\mathop}{otherlargesymbols}{"52}
\let\oldint\int
\renewcommand{\int}{\oldint\nolimits}
\renewcommand{\iint}{\int\!\!\!\int}
\renewcommand{\iiint}{\int\!\!\!\int\!\!\!\int}
\let\sum\undefined
\DeclareSymbolFont{otherlargesymbols}{OMX}{cmex}{m}{n}
\DeclareMathSymbol{\sum}{\mathop}{otherlargesymbols}{"50}
\DeclareFontFamily{U}{mathx}{} 
\DeclareFontShape{U}{mathx}{m}{n}{ <-> mathx10 }{}
\DeclareSymbolFont{mathx}{U}{mathx}{m}{n}
\DeclareMathAccent{\widecheck}{0}{mathx}{"71}
\newcommand{\R}{\mathbb{R}}
\newcommand{\E}{\mathbb{E}}
\newcommand{\EE}[2]{\mathbb{E}#1[ #2 #1]}
\newcommand{\EEc}[3]{\mathbb{E}#1[ #2 #1| #3 #1]}
\newcommand{\pb}{\mathbb{P}}
\newcommand{\pbb}[2]{\mathbb{P}#1[ #2 #1]}
\newcommand{\pbc}[3]{\mathbb{P}#1[ #2 #1| #3 #1]}
\newcommand{\I}[1]{\text{\usefont{U}{bbold}{m}{n}1}\{ #1 \}}
\newcommand{\indep}{\perp \!\!\! \perp}
\newcommand{\eq}[2]{\overset{(#1)}{#2}} % \mathrm{\romannumeral #1}
\newcommand{\supp}[1]{\mathcal{S}_{#1}}
\newcommand{\maintitle}{Structural Representations and Identification of Marginal Policy Effects}
\newcommand\ToC{%
	\startcontents
	\setcounter{tocdepth}{2}
	\onehalfspacing
	\renewcommand{\cftsecfont}{\normalfont\scshape}
	\renewcommand{\cftsecleader}{\cftdotfill{\cftdotsep}}  % dotlines
	\centering
	\printcontents{}{1}{{\scshape\large Contents \\} }
}
\begin{document}
\allowdisplaybreaks
\begin{titlepage}
	\thispagestyle{empty}
	\hypersetup{linkcolor=black}

	% title and author ///////////////////////////////
	\title{ \vspace{-1.3cm}  \Large\bf
		\maintitle\footnote{This version replaces an earlier draft titled ``\textit{Structural Characterizations of Marginal Policy Effects}'' (\href{https://arxiv.org/abs/2506.11694}{https://arxiv.org/abs/2506.11694}).}
		}
	
	\author[a]{\normalsize\scshape Zhixin Wang\thanks{
			Corresponding author. E-mail: \href{mailto:wangzhixin.ok@qq.com}{\texttt{wangzhixin.ok@qq.com}}.
			
			\quad
			\href{mailto:zhangyu@suibe.edu.cn}{\texttt{zhangyu@suibe.edu.cn}} (Y. Zhang);\,
			\href{mailto:zy.zhang@mail.shufe.edu.cn}{\texttt{zy.zhang@mail.shufe.edu.cn}} (Z. Zhang).
			}}
	\affil[a,c]{School of Economics, Shanghai University of Finance and Economics}
	
	\author[b]{\normalsize\textsc{Yu Zhang}}
	\affil[b]{School of International Business, Shanghai University of International Business and Economics}
	
	\author[c]{\normalsize\textsc{Zhengyu Zhang}}

	\date{}

	\maketitle

	\begin{abstract}
		This paper investigates the structural interpretation of the marginal policy effect (MPE) within nonseparable models. We demonstrate that, for a smooth functional of the outcome distribution, the MPE equals its functional derivative evaluated at the outcome-conditioned weighted average structural derivative. This equivalence is definitional rather than identification-based. Building on this theoretical result, we propose an alternative identification strategy for the MPE that complements existing methods.

		\vspace{0.5em}
		\noindent{\bfseries\textit{Keywords}}: %
		 Counterfactual distribution; Policy effect; Nonseparable model; Unconditional quantile regression.
		 
		 \noindent{\bfseries\textit{JEL Classification}}: C01, C21, C50.
	\end{abstract}

\thispagestyle{empty}

% Table of content ///////////////////////////
%\clearpage
%
%\ToC
%
%\thispagestyle{empty}
\end{titlepage}

%******************************************
% Introduction
%******************************************
\section{Introduction}
This paper investigates the structural interpretation of the \textit{marginal policy effect} (MPE) within nonseparable models. The MPE quantifies the marginal impact of a counterfactual change in the distribution of a policy variable on specific features of the outcome distribution. While the identification of the MPE has been studied by \citet{firpo2009}, \citet{rothe2010el,rothe2012} and \citet{mms2024}, its general connection to underlying structural effects remains underexplored. Specifically, \citet{firpo2009} and \citet{rothe2010el} examine this issue for the quantile functional.

This paper contributes to the literature by providing a structural interpretation of the MPE for general smooth functionals, thereby extending the analysis beyond the specific case of quantiles. Our structural analysis commences with the definition of the parameter itself, as opposed to its identification results.
We demonstrate that, under regularity conditions, the MPE for a Hadamard-differentiable functional equals its functional derivative, evaluated at the outcome-conditioned weighted average structural derivative.
For the quantile functional, we derive a structural decomposition of \textit{unconditional quantile regression} (UQR) estimand defined by \citet{firpo2009}. We further extend their proposition by showing that, under conditional independence and without requiring monotonicity in the structural error, the \(\tau\)-th UQR estimand equals the average structural derivative for individuals at the \(\tau\)-th quantile of the outcome distribution.

As a second contribution, this paper presents an alternative strategy for identifying the MPE. This strategy is applicable when the functional of interest is Hadamard-differentiable and the \textit{local average structural derivative}, as proposed by \cite{hm2007,hm2009}, is identifiable.
We present two general identification propositions: one for the single-function model under conditional independence and the other for the triangular system using the control variable method of \citet{imbensNewey2009}. These propositions can be used to derive or generalize existing identification results for MPEs.

The literature has extensively studied the structural interpretations of parameters and estimands. For instance, \citet{cherno2013} provide treatment effect interpretations for counterfactual distribution effects under unconfoundedness. \citet{sasaki2015} examines structural interpretations of conditional quantile regression estimands. The relationship between continuous quantile treatment effects and structural partial derivatives is also explored by \citet{su2019} for single-equation models, and by \citet{chernoPanel2015} for panel data models.

This paper is organized as follows. Section \ref{sec:para} defines the target parameter. Section \ref{sec:struc.interpre} introduces the structural model and presents the main results. Section \ref{sec:apply} presents the general identification propositions and their applications. Section \ref{sec:discuss.esti} discusses the estimation approaches briefly. Section \ref{sec:conclusion} concludes. Complete proofs of all results are available in the Supplementary Appendix.

%******************************************
% Target parameter
%******************************************
\section{Target Parameter}\label{sec:para}
Let \(Y\) denote the outcome variable and \(D\) the policy variable subject to intervention, with both variables taking values in $\R$. 
\begin{definition}[Policy Intervention and Counterfactual Outcome]
	Let $\pi_t:\R \to \R$ denote a \textit{policy function} indexed by \(t\in[0,1]\). 
	Policymakers can implement a marginal intervention through \(\pi_t\), perturbing the policy variable \(D\) to create the counterfactual variable \(D^t:=\pi_t(D)\). The \textit{counterfactual outcome} corresponding to \(D^t\) is denoted by \(Y^t\). This outcome remains unobserved because the structural relationship between \(D^t\)  and \(Y^t\) is unspecified.
\end{definition}
The above policy intervention corresponds to a counterfactual change in the distribution of the policy variable, from its original cumulative distribution function (CDF) $F_D$ to $F_{D^t}$. For instance, an intervention implemented by $\pi_t(D)=D+t$ yields the counterfactual CDF $F_{D^t}(d)=\pb[D+t\leq d]=F_D(d-t)$. Unlike the counterfactual change considered here,  standard treatment effect models define potential outcomes \(Y(d)\) and evaluate the impact of changing the value of \(D\) from $d$ to $d'$ on functionals of $F_{Y(d)}$, independent of changes in the distribution of \(D\) itself.  
To further clarify the distinction, consider a linear random coefficient model \(Y(d)=\beta_0(\varepsilon) + \beta_1(\varepsilon)d\) with \(Y=Y(D)\) and \(Y^t=Y(D^t)\), where \(\varepsilon\) captures unobserved heterogeneity. In this context, the average policy effect is \(\E[Y^t-Y]=\E[(\pi_t(D)-D)\beta_1(\varepsilon)]\), while the average treatment effect is \(\E[Y(d')-Y(d)]=(d'-d)\E[\beta_1(\varepsilon)]\).

We now present several concrete examples of policy functions \(\pi_t\).
\begin{example}
	\citet{firpo2009} and \citet{rothe2010el} study the location shift intervention \(\pi_t(D)=D+t\), which represents a special case of the more general \(\pi_t(D)=\mu + l(t) + (D-\mu)s(t)\) proposed by \citet{mms2024}. Here, \(\mu\) is a known parameter, while \(l(t)\) and \(s(t)>0\) represent the location and scale transformations, respectively.  
	\citet{rothe2012} analyzes the rank-preserving transformation $\pi_t(D)= H_t^{-1}(F_D(D))$, where \(H_t^{-1}(\tau):=\inf\{d:H_t(d) \geq \tau\}\) is the \(\tau\)-th quantile of a perturbed distribution \(H_t\). Other examples include the mean-preserving transformation \(\pi_t(D) = \E[D] + (1+\alpha t)(D-\E[D])\), where \(\alpha \in [-1,1]\) governs the extent of dispersion.
\end{example}

Let \(\mathcal{F}\) denote the space of one-dimensional CDFs and \(\ell^\infty(\mathbb{T})\) the space of all bounded functions defined on a set \(\mathbb{T}\). Let \(F_{Y^t}(\cdot):=\pb[Y^t\leq \cdot]\) denote the CDF of the counterfactual outcome \(Y^t\). We study the following parameter.
\begin{definition}\label{def:mpe}
	For a functional $\Gamma:\mathcal{F} \to \ell^\infty(\R)$, the \textit{Marginal Policy Effect} (MPE) of a counterfactual change in $D$ on $\Gamma(F_Y)$ is defined as
	\begin{align*}
		\theta_\Gamma:=\partial_t\Gamma(F_{Y^t})|_{t=0} :=\lim_{t \downarrow 0} \frac{\Gamma(F_{Y^t})-\Gamma(F_Y)}{t}  
	\end{align*}
	provided that the limit exists. 
\end{definition}

We illustrate the above parameter through several representative examples of functionals \(\Gamma\).
\begin{example}\label{exp:gamma.para}
	A theoretically useful special case is the identity mapping \(id_y:= F\mapsto F(y)\), which yields the distributional MPE:
	\begin{align*}
		\theta_{id}(y):=\lim_{t\downarrow 0} \frac{F_{Y^t}(y)-F_Y(y)}{t}.	
	\end{align*}
	Another fundamental example is the quantile functional \(Q_\tau:= F \mapsto \inf\{y:F(y) \geq \tau\}\) for \(\tau \in (0,1)\), which generates the quantile MPE studied by \citet{firpo2009} and \citet{mms2024}:
	\begin{align*}
		\theta_{Q}(\tau):= \lim_{t \downarrow 0} \frac{Q_\tau(F_{Y^t})-Q_\tau(F_Y)}{t}.
	\end{align*}
	For the mean functional \(\mu: F \mapsto \int y \,dF(y)\), the average MPE is defined as
	\begin{align*}
		\theta_\mu := \lim_{t \downarrow 0}\frac{\mu(F_{Y^t})-\mu(F_Y)}{t}. 
	\end{align*}
	Furthermore, \(\Gamma\) can be specified as various inequality measures. For instance, consider the \textit{Gini coefficient} 
	\(GC:F \mapsto 1 - 2\int_{0}^{1}L_p(F)\,dp\), where \(L_p(F):=\frac{\int_{0}^{p}Q_\tau(F)\,d\tau}{\int y\,dF(y)}\) denotes the \textit{Lorenz curve}. The Gini MPE is defined as 
	\begin{align*}
		\theta_{GC}:=\lim_{t \downarrow 0}\frac{GC(F_{Y^t}) - GC(F_{Y})}{t},
	\end{align*}
	which quantifies the marginal effect of a perturbation in \(F_D\) on the inequality of outcome $Y$.
\end{example}

%******************************************
% Structural interpretation
%******************************************
\section{Structural Interpretations}\label{sec:struc.interpre}
%----------------------------
\subsection{Nonseparable Models and Related Assumptions}
%----------------------------
We introduce the following nonseparable model to characterize the dependence structure.
\begin{assumptionS} Let $m(\cdot,\cdot,\cdot)$ be a measurable function of unknown form. \label{A:struct.m.l}~
	\begin{itemize}
		\item[(a)] The status quo outcome $Y$ is generated by
		\begin{align}
			Y=m(D,X,\varepsilon), \label{eq:status.quo.Y}
		\end{align}
		where $D$ is the continuous policy variable, $X$ is a vector of covariates, and $\varepsilon$ represents the unobserved disturbance. Suppose that $m(d,x,e)$ is continuously differentiable in $d$ for each $(x,e)$. The partial derivative $\partial_d m(\cdot,\cdot,\cdot)$ is termed the structural derivative.
		
		\item[(b)] The counterfactual outcome $Y^t$ is generated by
		\begin{align}
			Y^t=m(\pi_t(D),X,\varepsilon)=:m(D^t,X,\varepsilon), \label{eq:counterfactual.Y}
		\end{align}
		where $\pi_0(d)\equiv d$ (status quo policy). For each $d$, assume $\pi_t(d)$ is continuously differentiable in $t$ at $t=0$, with bounded derivative $\dot{\pi}(d):=\partial_t\pi_t(d)|_{t=0}$.
	\end{itemize}
\end{assumptionS}

To derive the structural interpretation, the following conditions are required.
\begin{assumptionR}[Regularity]\label{A:regularity}~
	\begin{itemize}
		\item[(a)] The CDF $F_Y$ is continuously differentiable on its support \(\supp{Y}\), with density $f_Y$ satisfying \(0<f_Y(y)<\infty\) for every \(y\in \supp{Y}\).
		
		\item[(b)] Let $\dot{m}(d,x,e):=\partial_t m(\pi_t(d),x,e)|_{t=0}=\dot{\pi}(d)\partial_d m(d,x,e)$. Suppose that $\dot{m}(\cdot,\cdot,\cdot)$ is measurable and satisfies
		\begin{align*}
			\pbb{\Big}{\big| m(\pi_t(D),X,\varepsilon) - m(D,X,\varepsilon) - t \,\dot{m}(D,X,\varepsilon) \big| \geq \nu t } = o(t) 
		\end{align*}
		as $t \downarrow 0$ for every $\nu > 0$.
		
		\item[(c)] The distribution of $(Y,\dot{m}(D,X,\varepsilon))$ is absolutely continuous with density $f_{Y,\dot{m}}(y,y')$ that is continuous in $y$ for each $y'$. Furthermore, there exists a Lebesgue integrable function $g:\R \to \R$ satisfying $\int |y' g(y')|\,dy' < \infty$ and a positive constant \(C\) such that $f_{Y,\dot{m}}(y,y') \leq C |g(y')|$ for all \((y,y')\).
	\end{itemize}
\end{assumptionR}
Assumption \ref{A:regularity} does not impose structural conditions like the invariance of \(F_{Y|D,X}\) under small manipulations, as assumed in \citet{firpo2009}. Assumption \ref{A:regularity} (a) implies that $Y$ is continuously distributed with a regular density. Assumptions \ref{A:regularity}(b)--(c) are analogous to \citeauthor{hm2007}'s (\citeyear{hm2007}) Assumptions A3--A4, implying smooth variation in the structural response $\dot{m}$.

The directional derivative structure of \(\theta_\Gamma\) motivates our focus on smooth functionals \(\Gamma\). 
\textit{Hadamard-differentiable} functionals encompass key causal statistics for policy analysis, including means, quantiles, and common inequality measures (e.g., the Gini coefficient).\footnote{For additional Hadamard-differentiable inequality measures, see, e.g., \citet{rothe2010joe} and \citet{firpo2016}.} This differentiability is necessary for both the functional Delta method (\citealp{vaart2000}) and standard bootstrap inference (\citealp{fang2019}).

Formally, let $(\mathbb{D},\|\cdot\|_{\mathbb{D}})$ and $(\mathbb{B},\|\cdot\|_{\mathbb{B}})$ be normed spaces.
A functional $\Gamma:\mathbb{D}_\Gamma \subseteq \mathbb{D} \to \mathbb{B}$ is \textit{Hadamard-differentiable} at $F\in\mathbb{D}_\Gamma$ tangentially to a set $\mathbb{D}_0 \subseteq \mathbb{D}$, if there exists a continuous linear functional $\Gamma'_{F}: \mathbb{D}_0 \to \mathbb{B}$ such that
\begin{align*}
	\lim_{t \downarrow 0} \left|\left| \frac{\Gamma(F + t h_t) - \Gamma(F)}{t} -\Gamma'_{F}(h)\right|\right|_{\mathbb{B}} = 0
\end{align*}
for every sequence $\{h_t\}\subset \mathbb{D}$ satisfying $h_t \to h\in\mathbb{D}_0$ and $F+th_t\in \mathbb{D}_\Gamma$, where $\Gamma'_F$ is called the \textit{Hadamard derivative} of $\Gamma$ at $F$. See, for example,  \citet[Chapter 20.2]{vaart2000} for more details.

%***************************************************
\subsection{Structural Representation of the MPE}
%***************************************************
\begin{theorem}[Structural Representation of $\theta_\Gamma$]\label{thm:struct.theta}
	Under Assumptions \ref{A:struct.m.l} and \ref{A:regularity}, we obtain
	\begin{align*}
		\theta_{id}(y) :=\lim_{t\downarrow 0} \frac{F_{Y^t}(y)-F_Y(y)}{t}
		= \EEc{\big}{\omega^f(y,D)\,\partial_d m(D,X,\varepsilon)}{Y=y}
	\end{align*}
	for every $y\in \supp{Y}$, where $\omega^f(y,d):=-f_Y(y)\dot{\pi}(d)$.
	If $\Gamma$ is Hadamard-differentiable at $F_Y$, we have
	\begin{align*}
		\theta_\Gamma
		= \Gamma'_{F_Y} \Big( \EEc{\big}{\omega^f(\cdot,D)\,\partial_d m(D,X,\varepsilon)}{Y=\cdot}\Big), 
	\end{align*}
	where \(\Gamma'_{F_Y}\) denotes the Hadamard derivative of \(\Gamma\) at \(F_Y\).
\end{theorem}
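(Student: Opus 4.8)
The plan is to prove the distributional statement first, since the general Hadamard case reduces to it once the functional derivative is applied through the chain rule. The starting point is the identity $F_{Y^t}(y)=\mathbb{P}[m(\pi_t(D),X,\varepsilon)\le y]$. By Assumption \ref{A:regularity}(b), the counterfactual outcome admits the first-order expansion $m(\pi_t(D),X,\varepsilon)=Y+t\,\dot m(D,X,\varepsilon)+R_t$ with a remainder satisfying $\mathbb{P}[|R_t|\ge \nu t]=o(t)$ for every $\nu>0$. Writing $W:=\dot m(D,X,\varepsilon)=\dot\pi(D)\,\partial_d m(D,X,\varepsilon)$ and $G_t(y):=\mathbb{P}[Y+tW\le y]$, I would remove the remainder by a sandwiching argument: for any $\nu>0$ the inclusions $\{Y+tW\le y-\nu t\}\setminus\{|R_t|\ge\nu t\}\subseteq\{Y^t\le y\}\subseteq\{Y+tW\le y+\nu t\}\cup\{|R_t|\ge\nu t\}$ give $G_t(y-\nu t)-o(t)\le F_{Y^t}(y)\le G_t(y+\nu t)+o(t)$.

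Second, I would differentiate $G_t$ at $t=0$. Using the joint density of Assumption \ref{A:regularity}(c), $G_t(y)=\int\big(\int_{-\infty}^{y-tw}f_{Y,\dot m}(y',w)\,dy'\big)\,dw$, and the envelope $f_{Y,\dot m}(y,w)\le C|g(w)|$ with $\int|w\,g(w)|\,dw<\infty$ legitimizes differentiation under the integral sign, yielding $\psi(y):=\partial_t G_t(y)|_{t=0}=-\int w\,f_{Y,\dot m}(y,w)\,dw$. Combining this with the sandwich and letting $\nu\downarrow 0$ (the boundary terms contribute $\pm\nu f_Y(y)\to 0$ by Assumption \ref{A:regularity}(a)) gives $\theta_{id}(y)=\psi(y)$. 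Factoring out $f_Y(y)$ and rewriting the integral against the conditional law $f_{\dot m\mid Y}(w\mid y)=f_{Y,\dot m}(y,w)/f_Y(y)$ turns this into $-f_Y(y)\,\mathbb{E}[\dot m(D,X,\varepsilon)\mid Y=y]=\mathbb{E}[\omega^f(y,D)\,\partial_d m(D,X,\varepsilon)\mid Y=y]$, which is the first claim.

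For a general Hadamard-differentiable $\Gamma$, I would view $F_{Y^t}=F_Y+t\,h_t$ with $h_t:=(F_{Y^t}-F_Y)/t$. The first part shows $h_t(y)\to h(y):=\theta_{id}(y)$, and the continuity of $f_{Y,\dot m}(\cdot,w)$ together with the integrable envelope in Assumption \ref{A:regularity}(c) makes $h$ continuous, hence a legitimate tangent element of $\mathbb{D}_0$. Since every $F_{Y^t}$ is a genuine CDF in the domain of $\Gamma$, applying the definition of Hadamard differentiability along the sequence $h_t\to h$ gives $\theta_\Gamma=\lim_{t\downarrow0}[\Gamma(F_Y+t h_t)-\Gamma(F_Y)]/t=\Gamma'_{F_Y}(h)$, which is the asserted representation once $h$ is identified with the conditional expectation above.

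The main obstacle is the gap between the pointwise convergence $h_t\to h$ delivered by the first part and the norm convergence in $\mathbb{D}$ that Hadamard differentiability formally requires along the tangent sequence. I would close this by upgrading to uniform convergence: the linearization error is controlled uniformly in $y$ by the single probability bound of Assumption \ref{A:regularity}(b), while the uniformity of $[G_t(y)-F_Y(y)]/t\to\psi(y)$ follows from a dominated-convergence and uniform-integrability argument built on the envelope $C|g(w)|$ and the continuity of $f_{Y,\dot m}(\cdot,w)$. Care is needed where $f_Y$ is large, so the argument is cleanest on compact subsets of $\supp{Y}$ (or under a bounded-density strengthening), with the tails absorbed through the integrability of $g$; it is this step, rather than the elementary differentiation under the integral, where the regularity conditions do their real work.
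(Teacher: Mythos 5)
Your proposal is correct and follows essentially the same route as the paper's proof: linearize $m(\pi_t(D),X,\varepsilon)$ using Assumption R1(b), pass the difference quotient of $F_{Y^t}$ through the joint density $f_{Y,\dot m}$ with the envelope $C|g(\cdot)|$ of R1(c) justifying the limit interchange, identify the result with $-f_Y(y)\,\E[\dot m(D,X,\varepsilon)\mid Y=y]$ via R1(a), and then feed $h_t=(F_{Y^t}-F_Y)/t$ into the definition of Hadamard differentiability (your sandwich-plus-Leibniz organization replaces the paper's event-algebra/change-of-variables/Fubini computation, but the assumptions do identical work in both). The one substantive difference is to your credit: you explicitly flag that $h_t\to\theta_{id}$ must hold in the norm of $\mathbb{D}$ (uniformly, for the $\ell^\infty$ tangent spaces relevant to quantiles and related functionals) rather than merely pointwise --- a step the paper's own proof passes over silently --- and your sketch of how to close it is the right idea, though, as you concede, making it airtight requires compactness of $\supp{Y}$ or a mild uniform strengthening of the continuity in R1(c).
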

Theorem \ref{thm:struct.theta} states that: (i) the marginal effect of counterfactual changes in \(F_D\) on the outcome CDF \(F_Y(y)\), equals the weighted average structural effect for individuals with \(Y=y\). (ii) For any Hadamard-differentiable transformation \(\Gamma\), the marginal impact on \(\Gamma(F_Y)\) is given by the Hadamard derivative of \(\Gamma\) at \(F_Y\), applied to the weighted average structural effect function identified in (i).

Theorem \ref{thm:struct.theta} extends the literature in two principal ways: (i) When considering \(\pi_t(\cdot) = H^{-1}_t(F_D(\cdot))\), where \(H_t\) is the counterfactual CDF of \(F_D\), Theorem \ref{thm:struct.theta} offers a structural interpretation of \citeauthor{rothe2012}'s (\citeyear{rothe2012}) \textit{marginal partial distributional policy effect}. (ii) For \(\Gamma=Q_\tau\), it provides a structural interpretation for the \textit{unconditional quantile effect} studied by \citet{mms2024}, as formalized below.
\begin{corollary}[Structural Representation of $\theta_Q$] \label{cory:struct.theta.Q}
	For \(\Gamma=Q_\tau\), under the assumptions of Theorem \ref{thm:struct.theta} and assuming the support \(\supp{Y}\) is compact,  we obtain
	\begin{align*}
		\theta_Q(\tau)=\EEc{\big}{\dot{\pi}(D)\,\partial_d m(D,X,\varepsilon)}{Y=q_\tau}
	\end{align*}
	for every \(\tau \in (0,1)\), where \(q_\tau := Q_Y(\tau)\).
\end{corollary}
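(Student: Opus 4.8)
The plan is to obtain Corollary 1 as a direct specialization of the second display in Theorem 1 to $\Gamma=Q_\tau$, for which the Hadamard derivative is explicitly known. First I would record the standard fact that, under Assumption R1(a) and the compactness of $\supp{Y}$, the quantile map $Q_\tau:F\mapsto\inf\{y:F(y)\geq\tau\}$ is Hadamard-differentiable at $F_Y$ tangentially to the set of functions continuous at $q_\tau:=Q_Y(\tau)$, with derivative $Q'_{\tau,F_Y}(h)=-h(q_\tau)/f_Y(q_\tau)$; see, e.g., \citet[Lemma 21.3]{vaart2000}. Compactness of the support is what guarantees that this holds suitably uniformly in $\tau$, so that the quantile process is Hadamard-differentiable as a map into $\ell^\infty(\R)$ and the corollary genuinely falls within the scope of Theorem 1.

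Next I would verify that the argument supplied by Theorem 1, namely the function $y\mapsto\theta_{id}(y)=\EEc{\big}{\omega^f(y,D)\,\partial_d m(D,X,\varepsilon)}{Y=y}$, lies in the relevant tangent set. Substituting $\omega^f(y,d)=-f_Y(y)\dot{\pi}(d)$ and writing out the conditional expectation against the conditional density $f_{Y,\dot{m}}(y,\cdot)/f_Y(y)$, the factor $f_Y(y)$ cancels and $\theta_{id}(y)=-\int y'\,f_{Y,\dot{m}}(y,y')\,dy'$. Continuity of $\theta_{id}$ in $y$ then follows from the continuity of $f_{Y,\dot{m}}(\cdot,y')$ together with the integrable dominating bound $C|g(\cdot)|$ in Assumption R1(c), via dominated convergence. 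Hence $\theta_{id}$ is continuous at $q_\tau$, and the Hadamard derivative $Q'_{\tau,F_Y}$ may legitimately be applied to it.

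With these ingredients in place, applying the second conclusion of Theorem 1 with $\Gamma=Q_\tau$ gives $\theta_Q(\tau)=Q'_{\tau,F_Y}(\theta_{id})=-\theta_{id}(q_\tau)/f_Y(q_\tau)$. Evaluating $\theta_{id}$ at $q_\tau$ with $\omega^f(q_\tau,d)=-f_Y(q_\tau)\dot{\pi}(d)$ and pulling the constant $f_Y(q_\tau)$ out of the conditional expectation yields $\theta_{id}(q_\tau)=-f_Y(q_\tau)\,\EEc{\big}{\dot{\pi}(D)\,\partial_d m(D,X,\varepsilon)}{Y=q_\tau}$. The two factors of $f_Y(q_\tau)$ and the two sign changes cancel, producing the claimed identity $\theta_Q(\tau)=\EEc{\big}{\dot{\pi}(D)\,\partial_d m(D,X,\varepsilon)}{Y=q_\tau}$.

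The only step demanding genuine care, rather than routine cancellation, is the appeal to Hadamard differentiability of the quantile functional and the precise role of the compact-support hypothesis. For a single fixed $\tau$ the pointwise derivative formula is classical and needs only a positive density at $q_\tau$; but because Theorem 1 treats $\Gamma$ as a map into $\ell^\infty(\R)$, I must invoke differentiability of the entire quantile process, and it is here that compactness of $\supp{Y}$ (ensuring $f_Y$ stays bounded away from zero on the relevant range and the inverse map is well-behaved) is essential. Everything downstream of that verification is the elementary cancellation of the density factors introduced by the weight $\omega^f$.
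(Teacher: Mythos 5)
Your proposal is correct and follows essentially the same route as the paper's own proof: invoke the Hadamard differentiability of the quantile functional (the paper cites Lemma 21.4 of van der Vaart, yielding $[Q_\tau]'_{F_Y}(h)=-h(q_\tau)/f_Y(q_\tau)$), apply the second conclusion of Theorem \ref{thm:struct.theta}, and cancel the factor $f_Y(q_\tau)$ introduced by the weight $\omega^f$. Your additional verification that $\theta_{id}$ is continuous at $q_\tau$ (so that it lies in the tangent set) is a point the paper leaves implicit, but it does not change the argument.
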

Corollary \ref{cory:struct.theta.Q} states that the MPE of a counterfactual change in \(F_D\) on the \(\tau\)-th quantile of the outcome distribution, \(Q_\tau(F_Y)\), equals the weighted average structural derivative for individuals with outcome value \(q_\tau\), with weights determined by the policy function \(\pi_t\). 

%--------------------------------------------------------------
\subsection{Structural Representation of the UQR}\label{sec:struct.uqr}
%--------------------------------------------------------------
This section extends the structural interpretation of the \textit{unconditional quantile partial effect} (UQPE) proposed by \citet{firpo2009}. 
For comparison purposes, we focus on the location shift \(D^t=D+t\) and denote \(\theta_Q\) as \(\theta_Q^L\) for this specific case. 
The UQPE of \citet[p. 958]{firpo2009} is defined as
\begin{align*}
	\beta^{\mathrm{UQR}}(\tau)
	&:=-\frac{\EE{\big}{\partial_d{F_{Y|D,X}(q_\tau|D,X)}}}{f_Y(q_\tau)}.
\end{align*}
We denote this UQPE as \(\beta^{\mathrm{UQR}}(\tau)\), which corresponds to their \textit{unconditional quantile regression} (UQR) estimand rather than an unobserved parameter.

Note that \(\theta_Q^L\) is not equal to  \(\beta^{\mathrm{UQR}}\) under the standard regularity assumptions mentioned above. Consequently, Corollary \ref{cory:struct.theta.Q} cannot be used to structurally interpret  \(\beta^{\mathrm{UQR}}\) as \( \E[\partial_dm(D,X,\varepsilon)|Y=q_\tau]\) under these assumptions.   \citet{firpo2009}, however, demonstrate that \(\theta_Q^L(\tau)=\beta^{\mathrm{UQR}}(\tau)\) holds under the structural assumption \(F_{Y^t|D+t,X} = F_{Y|D,X}\). 
A broader question thus emerges: What does the UQR estimand $\beta^{\mathrm{UQR}}$ identify in nonseparable models when structural assumptions are not imposed? To address this question formally, we introduce the following conditions.
\begin{assumptionR}[Regularity] \label{A:regular.hm}
	For each $(d,x)\in\supp{D}\times\supp{X}$:
	\begin{itemize}
		\item[(a)] The conditional distribution of $Y$ given $(D,X)$ is absolutely continuous, with density $f_{Y|D,X}(\cdot|d,x)$ that is strictly positive, bounded and continuous on its support. Furthermore, \(F_{Y|D,X}(y|d,x)\) is continuously differentiable in \(d\) for each $(y,x)$.
		
		\item[(b)] $\partial_d m(d,x,\cdot)$ is measurable and satisfies
		\[
		\pbc{\Big}{\big| m(d+\delta,x,\varepsilon) - m(d,x,\varepsilon) - \delta\,\partial_d m(d,x,\varepsilon) \big| \geq \nu \delta}{D=d,X=x} = o(\delta)
		\]
		as $\delta \downarrow 0$ for every $\nu > 0$.
		\item[(c)] The conditional distribution of \((Y,\partial_d m(d,x,\varepsilon))\) given \((D,X)\) is absolutely continuous, with density $f_{Y,\partial_dm^{d,x}|D,X}(\cdot,y'|d,x)$ that is continuous for each $(y',d,x)$. Furthermore, there exists a Lebesgue integrable function $g:\R\to\R$ satisfying $\int |y'g(y')|\,dy'<\infty$ and a positive constant $C$ such that $f_{Y,\partial_dm^{d,x}|D,X}(y,y'|d,x) \leq C |g(y')|$ for every \((y,y',d,x)\).
		
		\item[(d)] $\pb[m(d',x,\varepsilon)=Q_{Y|D,X}(\alpha|d,x)|D=d,X=x]=0$ for $d'$ in a neighborhood of $d$. The conditional distribution of \(\varepsilon\) given \((D,X)\) is absolutely continuous with the strictly positive density $f_{\varepsilon|D,X}(\cdot|d,x)$. Furthermore, $f_{\varepsilon|D,X}(e|d,x)$ is partially differentiable in \(d\) for each \((e,x)\), and satisfies
		\(\sup_{|\delta| \leq c} \left| \frac{f_{\varepsilon|D,X}(e|d+\delta,x)-f_{\varepsilon|D,X}(e|d,x)}{\delta}\right| < \infty\) 
		for some positive constant $c$.
	\end{itemize}
\end{assumptionR}
Assumptions \ref{A:regular.hm}(a)--(b) correspond to \citeauthor{hm2007}'s (\citeyear{hm2007}) Assumptions A2--A3, where the continuous differentiability of \(F_{Y|D,X}(y|\cdot,x)\) is equivalent to that of \(Q_{Y|D,X}(\alpha|\cdot,x)\). Assumptions \ref{A:regular.hm}(c) parallels Assumption A4 in \citet{hm2007}, ensuring smooth variation in the structural effect \(\partial_dm\). 
Finally, Assumption \ref{A:regular.hm}(d) aligns with Assumption A1 in \citet{hm2009}, which facilitates decomposition of the endogeneity-induced bias in the UQR estimand.

\begin{theorem}[Structural Representation of \(\beta^{\mathrm{UQR}}\)]\label{thm:struct.uqr.reg}
	Under Assumptions \ref{A:struct.m.l} and \ref{A:regular.hm}, we obtain
	\begin{align*}
		\beta^{\mathrm{UQR}}(\tau)
		=\EEc{\big}{\partial_dm(D,X,\varepsilon)}{Y=q_\tau} - \frac{\EE{\big}{\I{Y\leq q_\tau}\,\partial_d{\ln\big(f_{\varepsilon|D,X}(\varepsilon|D,X)\big)}}}{f_Y(q_\tau)} 
	\end{align*}
	for every $\tau \in (0,1)$.
\end{theorem}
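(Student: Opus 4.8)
The strategy is to represent the conditional distribution function through the structural equation and then differentiate in $d$, isolating the channel that operates through $m$ from the channel that operates through the conditional law of $\varepsilon$. Conditioning on $\{D=d,X=x\}$ and using \eqref{eq:status.quo.Y}, I would write
\begin{align*}
	F_{Y|D,X}(q_\tau|d,x)=\int \I{m(d,x,e)\leq q_\tau}\, f_{\varepsilon|D,X}(e|d,x)\,de,
\end{align*}
where $q_\tau:=Q_Y(\tau)$. Since both the indicator and the density depend on $d$, the product rule suggests the split
\begin{align*}
	\partial_d F_{Y|D,X}(q_\tau|d,x)=A(d,x)+B(d,x),\qquad B(d,x):=\int \I{m(d,x,e)\leq q_\tau}\,\partial_d f_{\varepsilon|D,X}(e|d,x)\,de,
\end{align*}
with $A(d,x)$ the contribution from differentiating the indicator while holding the density fixed. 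Differentiation under the integral sign for $B$ would be justified by the uniform bound on the difference quotient of $f_{\varepsilon|D,X}$ in Assumption \ref{A:regular.hm}(d).

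The term $A(d,x)$ is the crux and the main obstacle, because the integrand is discontinuous in $d$ and cannot be differentiated naively. I would treat it as the limit
\begin{align*}
	A(d,x)=\lim_{\delta\downarrow 0}\frac{1}{\delta}\,\EEc{\big}{\I{m(d+\delta,x,\varepsilon)\leq q_\tau}-\I{m(d,x,\varepsilon)\leq q_\tau}}{D=d,X=x},
\end{align*}
and evaluate it via the local expansion $m(d+\delta,x,\varepsilon)=m(d,x,\varepsilon)+\delta\,\partial_d m(d,x,\varepsilon)+o(\delta)$ supplied by Assumption \ref{A:regular.hm}(b). The indicator flips exactly when $m(d,x,\varepsilon)$ lies in a window of width proportional to $\delta\,\partial_d m(d,x,\varepsilon)$ about $q_\tau$; a change of variables against the joint density $f_{Y,\partial_d m^{d,x}|D,X}$ of Assumption \ref{A:regular.hm}(c), together with its domination by $C|g(\cdot)|$, would then yield
\begin{align*}
	A(d,x)=-f_{Y|D,X}(q_\tau|d,x)\,\EEc{\big}{\partial_d m(D,X,\varepsilon)}{Y=q_\tau,D=d,X=x}.
\end{align*}
This is the local average structural derivative identity of \citet{hm2007}; the sign is negative because raising $d$ increases $m$ and depletes the mass below $q_\tau$. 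Controlling the $o(\delta)$ remainder uniformly and interchanging limit and integral is exactly where Assumptions \ref{A:regular.hm}(b)--(c) carry the weight, and I expect this to be the most delicate part of the argument.

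To finish, for $B(d,x)$ I would write $\partial_d f_{\varepsilon|D,X}=f_{\varepsilon|D,X}\,\partial_d\ln f_{\varepsilon|D,X}$, so that $B(d,x)=\EEc{\big}{\I{Y\leq q_\tau}\,\partial_d\ln\big(f_{\varepsilon|D,X}(\varepsilon|d,x)\big)}{D=d,X=x}$, and then take expectations over $(D,X)$. By the law of iterated expectations the $B$-contribution becomes $\EE{\big}{\I{Y\leq q_\tau}\,\partial_d\ln(f_{\varepsilon|D,X}(\varepsilon|D,X))}$. For the $A$-contribution I would invoke Bayes' rule in the form $f_{Y|D,X}(q_\tau|d,x)\,f_{D,X}(d,x)=f_{D,X|Y}(d,x|q_\tau)\,f_Y(q_\tau)$ to factor out $f_Y(q_\tau)$ and collapse the inner conditional expectation by iterated expectations, giving
\begin{align*}
	\EE{\big}{f_{Y|D,X}(q_\tau|D,X)\,\EEc{\big}{\partial_d m(D,X,\varepsilon)}{Y=q_\tau,D,X}}=f_Y(q_\tau)\,\EEc{\big}{\partial_d m(D,X,\varepsilon)}{Y=q_\tau}.
\end{align*}
Combining the two pieces yields $\EE{\big}{\partial_d F_{Y|D,X}(q_\tau|D,X)}=-f_Y(q_\tau)\,\EEc{\big}{\partial_d m(D,X,\varepsilon)}{Y=q_\tau}+\EE{\big}{\I{Y\leq q_\tau}\,\partial_d\ln(f_{\varepsilon|D,X}(\varepsilon|D,X))}$, and dividing by $-f_Y(q_\tau)$ in the definition of $\beta^{\mathrm{UQR}}(\tau)$ delivers the stated decomposition. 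Assumption \ref{A:regular.hm}(a) supplies the regularity of $f_{Y|D,X}$ and the positivity of $f_Y(q_\tau)$ used throughout, while Assumption \ref{A:regular.hm}(d) ensures the log-density derivative and hence the endogeneity term are well defined.
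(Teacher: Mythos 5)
Your proposal is correct and follows essentially the same route as the paper: the decomposition of $\partial_d F_{Y|D,X}(q_\tau|d,x)$ into a structural channel and a log-density (endogeneity) channel is exactly the content of Theorem 4.1 of Hoderlein and Mammen (2009), which the paper invokes as a black box (at the conditional quantile $q_\alpha(d,x)$, then matched to $q_\tau$ via $\zeta_\tau$) and which you instead sketch from first principles directly at the fixed point $q_\tau$, with the correct signs and the correct roles for Assumptions R2(b)--(d). The final aggregation step via Bayes' rule and iterated expectations to produce the factor $f_Y(q_\tau)$ is identical to the paper's.
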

Theorem \ref{thm:struct.uqr.reg} generalizes Proposition 1 of \citet{firpo2009}. Note that if conditional independence $\varepsilon \indep D|X$ are satisfied, we have \(\partial_d\ln(f_{\varepsilon|D,X}(\varepsilon|d,x)) = 0\) and  
\begin{align}
	\beta^{\mathrm{UQR}}(\tau)= \EEc{\big}{\partial_d{m(D,X,\varepsilon)}}{Y=q_\tau}, \label{eq:struct.uqr}
\end{align}
%where .
Therefore, under conditional independence, the estimand \(\beta^{\mathrm{UQR}}(\tau)\) is interpreted as the average structural effect for individuals with outcome value \(q_\tau\). Furthermore, if $e\mapsto m(d,x,e)$ is strictly monotonic, as assumed by \citeauthor{firpo2009}'s (\citeyear{firpo2009}) Proposition 1, we can derive their result using Equation (\ref{eq:struct.uqr}).

\begin{remark}
	For identifying \(\theta_Q^L\) via \(\theta_Q^L(\tau)=\beta^{\mathrm{UQR}}(\tau)\), the conditional independence assumption $\varepsilon \indep D|X$ is stronger than the distributional invariance condition \(F_{Y^t|D+t,X}=F_{Y|D,X}\) used by \citet{firpo2009}. This is because
	\[F_{Y^t|D+t,X}(\cdot|d,x)=\int \I{m(d,x,e)\leq \cdot}\,dF_{\varepsilon|D+t,X}(e|d,x),\]
	and  under $\varepsilon \indep D|X$, we have \(F_{\varepsilon|D+t,X}(e|d,x)=F_{\varepsilon|X}(e|x) = F_{\varepsilon|D,X}(e|d,x)\).
	Nonetheless, the conditional independence remains essential for the structural interpretation of $\beta^{\mathrm{UQR}}(\tau)$, as established in Theorem \ref{thm:struct.uqr.reg} or initially demonstrated by \citet{firpo2009}.
\end{remark}

To conclude this section, Table \ref{tb:struct.rep} summarizes the structural interpretations of the various MPEs. All reported results follow directly from Theorem \ref{thm:struct.theta} or Theorem \ref{thm:struct.uqr.reg}.
\begin{table}[htb]
	\centering
	\fontsize{10}{20}\selectfont
	\begin{threeparttable}
		\caption{Structural Representations of the MPEs}\label{tb:struct.rep}
		\begin{tabular}{l l l l}
			\hline
			Intervention & Functional & Structural Interpretation  & Related Literature\\
			\hline
			\(\pi_t(D)\) & \(id_y\) & \(\E[\omega^f(y,D)\cdot\partial_dm|Y=y]\) &  ------ \\
			\(D+t\) & \(Q_\tau\) & \(\beta^{\mathrm{UQR}}(\tau) =\E[\partial_dm|Y=q_\tau]\) under \(\varepsilon\indep D|X\)  & \citet{firpo2009}\\
			\(\pi_t(D)\) & \(Q_\tau\) & \(\E[\dot{\pi}(D)\cdot\partial_dm|Y=q_\tau]\) & \citet{mms2024}\\
			\(\pi_t(D)\) & \(\mu\) & \(\E[\dot{\pi}(D)\cdot\partial_dm]\)  & \citet{wooldridge2004}  \\
			\(D+t\) & \(\Gamma\)  & \(\Gamma'_{F_Y}\big(\E[\partial_dm|Y=\cdot]\big)\) &  \citet{rothe2010el}\\
			\(H^{-1}_t(F_D(D))\) & \(\Gamma\) & \(\Gamma'_{F_Y}\left(\EEc{\big}{\omega^H(D)\cdot\partial_dm}{Y=\cdot}\right)\) &  \citet{rothe2012}\\
			\(\pi_t(D)\) & \(GC\) & \( \EE{\big}{\omega^{GC}(Y,D)\cdot\partial_dm} \) &  ------ \\
			\hline
		\end{tabular}
		\begin{tablenotes}
			\fontsize{8}{12}\selectfont
			\item \textit{Notation:} \(\partial_dm=\partial_dm(D,X,\varepsilon)\); \(H_t=F_D + t(G_D-F_D)\); \(\omega^H(d)=\frac{F_D(d)-G_D(d)}{f_D(d)}\); \(\omega^{GC}(y,d):=\frac{2\varphi(F_Y(y))\dot{\pi}(d)}{\E[Y]^2}\) and \(\varphi(\tau) = \int_{0}^{1}(\tau-p)Q_p(F_Y)\,dp\).
		\end{tablenotes}
	\end{threeparttable}
\end{table}

%******************************************
% Applications
%******************************************
\section{Applying Theorem \ref{thm:struct.theta} for Identification} \label{sec:apply}
Theorem \ref{thm:struct.theta} suggests a practical identification strategy for \(\theta_\Gamma\), which we term the \textit{LASD method}. This approach builds on the identification of the \textit{local average structural derivative} (LASD) proposed by \citet{hm2007}. This quantity is defined as
\[\theta_{\mathrm{LASD}}(d,x,y):=\EEc{\big}{\partial_d m(d,x,\varepsilon)}{D=d,X=x,Y=y}. \]
We note that, by the law of iterated expectations, Theorem \ref{thm:struct.theta} implies the following relationship:
\begin{align}
	\theta_\Gamma
	= \Gamma'_{F_Y} \Big( \EEc{\big}{\omega^f(\cdot,D)\,\theta_{\mathrm{LASD}}(D,X,\cdot)}{Y=\cdot}\Big), \label{eq:relation.theta.Gamma.LASD}
\end{align}
where the weight $\omega^f(y,d):=-f_Y(y)\dot{\pi}(d)$ is estimable from data. Consequently, identifying \(\theta_\Gamma\) reduces to identifying \(\theta_{\mathrm{LASD}}\).

This section establishes two general identification results for \(\theta_\Gamma\): one applicable to single-equation models and another for triangular systems. Prior to formal analysis, we compare the key identification step of the LASD method (Equation (\ref{eq:relation.theta.Gamma.LASD})) with existing approaches in the literature.

\subsection{Comparison of the MPE Identification Strategies}
\citet{firpo2009} and \citet{mms2024} employ \textit{influence function} (IF) approximations for identification purposes. 
\citet{firpo2009} analyze the counterfactual distribution \(F_{Y^t} =\int F_{Y|D,X}\,dF_{D+t,X}\), assuming distributional invariance \( F_{Y^t|D+t,X}=F_{Y|D,X}\). Their MPE identification builds on the G\^ateaux derivative characterization of the IF: 
\begin{align*}
	\theta_\Gamma
	= \int \mathrm{RIF}(y;\Gamma,F_Y)\,d(G_Y^*-F_Y)(y), % \label{eq:IF.firpo} 
\end{align*}
where \(\mathrm{RIF}(y;\Gamma,F_Y):=\Gamma(F_Y) + \mathrm{IF}(y;\Gamma,F_Y)\) denotes the \textit{recentered influence function} (RIF) and the distribution \(G^*_Y\) is constructed to satisfy the first-order local approximation \(F_{Y^t} = F_Y + t(G^*_Y-F_Y) + \mathcal{O}(t^2)\) as \(t \downarrow 0\). 
This methodology applies to \textit{G\^ateaux-differentiable}  functionals \(\Gamma\), a weaker condition than Hadamard differentiability, while specifically designed to evaluate counterfactual effects stemming solely from the location shift \(D^t=D+t\) in \(F_D\).

\citet{mms2024} examine general counterfactual transformations \(\pi_t(D)\) and employ the IF approximation of the Hadamard-differentiable functionals \(\Gamma\) for identification. Their approach builds on 
\begin{align*}
	\theta_\Gamma = \int \mathrm{IF}(y;\Gamma,F_Y)\,d\theta_{id}(y),% \label{eq:IF.mms}
\end{align*}
where \(\theta_{id}\) is identified under the conditional independence. This methodology differs from the LASD approach in two key respects: (i) It uses the analytic form of the influence function \(\mathrm{IF}(\cdot;\Gamma,F_Y)\) rather than the Hadamard derivative \(\Gamma'_{F_Y}\). (ii) It requires different smoothness conditions than Assumption \ref{A:regular.hm}, including continuous differentiability of both \(f_{D,X}(\cdot,x)\) and \(f_{Y|D,X}(y|\cdot,x)\) for each \(x\in \supp{X}\) and \(y\in  \supp{Y} \).

Differing from the IF approximation approach, \citet{rothe2010el,rothe2012} establishes the identification building on the Hadamard derivative characterization
\begin{align*}
	\theta_\Gamma = \Gamma'_{F_Y}\big(\theta_{id}\big).
\end{align*}
While our LASD method also employs this form, the key distinction lies in the implementation: Rothe's method directly identifies \(\theta_\Gamma\) through \(\theta_{id}\) for specific marginal perturbations of \(F_D\) (e.g. \(D^t=D+t\) or \(H_t^{-1}(F_D(D))\)), whereas our approach first establishes the relationship between \(\theta_\Gamma\) and \(\theta_{\mathrm{LASD}}\) for general perturbations \(D^t=\pi_t(D)\), and then identifies \(\theta_\Gamma\) via \(\theta_{\mathrm{LASD}}\).  
This leads to the general identification formulas for \(\theta_\Gamma\) that include existing results as special cases, as we demonstrate below.

\subsection{Single-Equation Model}
\begin{proposition}\label{pron:id.theta.exo}
	Under the assumptions of Theorem \ref{thm:struct.theta}, Assumption \ref{A:regular.hm}(a)--(c), and the conditional independence $\varepsilon \indep D \,|X$, we obtain
	\begin{align*}
		\theta_\Gamma = \Gamma'_{F_Y} \Big(\EEc{\big}{\omega^f(\cdot,D) \beta(D,X,\cdot)}{Y=\cdot} \Big)
	\end{align*}
	for every Hadamard-differentiable functional \(\Gamma\), where \(\beta(d,x,y):=-\frac{\partial_d{F_{Y|D,X}(y|d,x)}}{f_{Y|D,X}(y|d,x)}\).
\end{proposition}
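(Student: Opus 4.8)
The plan is to reduce the proposition entirely to Theorem \ref{thm:struct.theta} together with the conditional identification of the local average structural derivative under $\varepsilon \indep D \mid X$. The starting point is Equation (\ref{eq:relation.theta.Gamma.LASD}), which already expresses
\[
	\theta_\Gamma = \Gamma'_{F_Y}\Big( \EEc{\big}{\omega^f(\cdot,D)\,\theta_{\mathrm{LASD}}(D,X,\cdot)}{Y=\cdot}\Big),
\]
with $\theta_{\mathrm{LASD}}(d,x,y) = \EEc{\big}{\partial_d m(d,x,\varepsilon)}{D=d,X=x,Y=y}$. Because the weight $\omega^f$ and the Hadamard derivative $\Gamma'_{F_Y}$ are already in the target form, the whole task collapses to the pointwise identity $\theta_{\mathrm{LASD}}(d,x,y) = \beta(d,x,y) = -\,\partial_d F_{Y|D,X}(y|d,x)/f_{Y|D,X}(y|d,x)$; substituting this back into the display above delivers the claim.

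To establish that identity I would first use conditional independence to turn the observable conditional CDF into a purely structural object. Under $\varepsilon \indep D \mid X$ we have $f_{\varepsilon|D,X}(e|d,x)=f_{\varepsilon|X}(e|x)$, so that both $F_{Y|D,X}(y|d,x)$ and $F_{Y|D,X}(y|d+\delta,x)$ integrate the respective indicators $\I{m(d,x,e)\leq y}$ and $\I{m(d+\delta,x,e)\leq y}$ against the \emph{same} conditioning density $f_{\varepsilon|X}(e|x)$. Consequently, in the difference quotient $\delta^{-1}\big(F_{Y|D,X}(y|d+\delta,x)-F_{Y|D,X}(y|d,x)\big)$ only the structural argument $m(d+\delta,x,e)$ varies with $\delta$. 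This is the decisive role of conditional independence: it suppresses the density-channel term that would otherwise appear, namely the $\partial_d\ln f_{\varepsilon|D,X}$ correction that surfaces in the general case of Theorem \ref{thm:struct.uqr.reg}.

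The core of the argument is then to differentiate through the indicator, replaying the $\theta_{id}$ step of Theorem \ref{thm:struct.theta} but now conditional on $(D,X)=(d,x)$ and with the perturbation acting directly on the first argument of $m$. In the difference $\I{m(d+\delta,x,e)\leq y}-\I{m(d,x,e)\leq y}$, the conditional linear approximation of Assumption \ref{A:regular.hm}(b) lets me replace $m(d+\delta,x,e)$ by $m(d,x,e)+\delta\,\partial_d m(d,x,e)$ up to a remainder that is $o(\delta)$ in conditional probability, while Assumption \ref{A:regular.hm}(a),(c) supply the continuity of $f_{Y|D,X}$ and the integrable envelope $f_{Y,\partial_dm^{d,x}|D,X}(y,y'|d,x)\leq C|g(y')|$ needed to pass the limit inside the integral by dominated convergence. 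The limit evaluates to
\[
	\partial_d F_{Y|D,X}(y|d,x) = -\int y'\, f_{Y,\partial_dm^{d,x}|D,X}(y,y'|d,x)\,dy' = -\,\theta_{\mathrm{LASD}}(d,x,y)\,f_{Y|D,X}(y|d,x),
\]
and dividing by $f_{Y|D,X}(y|d,x)>0$ gives $\theta_{\mathrm{LASD}}=\beta$.

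I expect the delicate step to be the rigorous justification of passing to the limit through the indicator difference, i.e. controlling the contribution of the $o(\delta)$ remainder from Assumption \ref{A:regular.hm}(b) tightly enough to apply dominated convergence with the envelope $C|g(\cdot)|$ of Assumption \ref{A:regular.hm}(c). This is precisely the technical heart of the Hoderlein--Mammen identification, and since it mirrors the corresponding passage already executed for $\theta_{id}$ in Theorem \ref{thm:struct.theta}, the remaining manipulations — re-aggregating over $(D,X)$ given $Y=\cdot$ by the law of iterated expectations and invoking the linearity of $\Gamma'_{F_Y}$ — are routine.
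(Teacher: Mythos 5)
Your proposal is correct and follows essentially the same route as the paper: reduce the claim to the pointwise identity $\theta_{\mathrm{LASD}}(d,x,y)=\beta(d,x,y)$, then conclude via Theorem \ref{thm:struct.theta}, the law of iterated expectations, and linearity of $\Gamma'_{F_Y}$. The only difference is that where you re-derive the identification of the local average structural derivative from Assumptions \ref{A:regular.hm}(a)--(c) (correctly mirroring the $\theta_{id}$ argument, and in fact the same decomposition the paper spells out for the control-variable case in Proposition \ref{pron:id.theta.tri}), the paper simply cites Theorem 2.1 of \citet{hm2007} and reparametrizes from the quantile level $\alpha$ to the outcome value $y$.
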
 
Proposition \ref{pron:id.theta.exo} provides a general identification formula of \(\theta_\Gamma\) in the single-function model (\ref{eq:status.quo.Y}), where \(\beta(d,x,y)\) identifies \(\theta_{\mathrm{LASD}}(d,x,y)\) under the conditional independence (\citealp{hm2007}).

Turning to the quantile case, we define the conditional $\alpha$-quantile derivative (CQD) as $\beta^{\mathrm{CQD}}(\alpha,d,x):=\partial_d{Q_{Y|D,X}(\alpha|d,x)}$, corresponding to what \citet{firpo2009} call the conditional quantile partial effect (CQPE).
\begin{corollary} \label{cory:theta.Q.exo}
	For \(\Gamma=Q_\tau\), under the assumptions of Proposition \ref{pron:id.theta.exo} and assuming the support \(\supp{Y}\) is compact, we obtain
	\begin{align}
		\theta_Q(\tau)
		%		&= \EEc{\big}{\dot{\pi}(D)\,\beta(D,X,q_\tau)}{Y=q_\tau} \notag\\
		&= \frac{-1}{f_{Y}(q_\tau)}\int \dot{\pi}(d)\,\partial_d F_{Y|D,X}(q_\tau|d,x) \,dF_{D,X}(d,x) \label{eq:Q.mms}\\
		&= \int \dot{\pi}(d)\,\left(\omega_{q_\tau}(d,x)\cdot\beta^{\mathrm{CQD}}(\zeta_\tau(d,x),d,x)\right)\,dF_{D,X}(d,x)\label{eq:Q.ffl.pron2}\\
		&=\int \dot{\pi}(d)\,\beta^{\mathrm{CQD}}(\zeta_\tau(d,x),d,x)\,dF_{D,X|Y}(d,x|q_\tau), \label{eq:Q.alejo}
	\end{align}
	for every \(\tau \in (0,1)\), where \(\omega_y(d,x):=f_{Y|D,X}(y|d,x)/f_Y(y)\), and \(\zeta_\tau(d,x):=\{\alpha\in(0,1):Q_{Y|D,X}(\alpha|d,x)=q_\tau\}\) denotes the `matching' function.
\end{corollary}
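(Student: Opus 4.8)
The plan is to specialize Proposition \ref{pron:id.theta.exo} to the quantile functional $\Gamma = Q_\tau$ and then carry out a short sequence of change-of-measure and implicit-differentiation steps that pass among the three displayed forms. The substantive input is already supplied by Proposition \ref{pron:id.theta.exo}, which gives $\theta_\Gamma = \Gamma'_{F_Y}\big(\EEc{\big}{\omega^f(\cdot,D)\beta(D,X,\cdot)}{Y=\cdot}\big)$ with $\beta(d,x,y) = -\partial_d F_{Y|D,X}(y|d,x)/f_{Y|D,X}(y|d,x)$; the remaining task is therefore essentially computational, and the content of the corollary lies in recognizing the equivalent algebraic guises of a single conditional expectation.

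First I would record the Hadamard derivative of the quantile map. Under Assumption \ref{A:regularity}(a), so that $f_Y$ is positive and continuous at $q_\tau$, together with the compactness of $\supp{Y}$, the functional $Q_\tau: F \mapsto \inf\{y: F(y) \geq \tau\}$ is Hadamard-differentiable at $F_Y$ tangentially to functions continuous at $q_\tau$, with $Q'_{\tau,F_Y}(h) = -h(q_\tau)/f_Y(q_\tau)$ (see \citealp[Lemma 21.3]{vaart2000}). Applying this to $h(y) = \EEc{\big}{\omega^f(y,D)\beta(D,X,y)}{Y=y}$ and substituting $\omega^f(q_\tau,d) = -f_Y(q_\tau)\dot{\pi}(d)$, the two factors of $f_Y(q_\tau)$ cancel and I obtain the compact conditional-expectation form $\theta_Q(\tau) = \EEc{\big}{\dot{\pi}(D)\beta(D,X,q_\tau)}{Y=q_\tau}$, which also serves as a consistency check against Corollary \ref{cory:struct.theta.Q}.

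Next I would convert this conditional expectation into the three integral representations. Writing it as $\int \dot{\pi}(d)\beta(d,x,q_\tau)\,dF_{D,X|Y}(d,x|q_\tau)$ yields \eqref{eq:Q.alejo} once $\beta(d,x,q_\tau)$ is identified with $\beta^{\mathrm{CQD}}(\zeta_\tau(d,x),d,x)$. For \eqref{eq:Q.ffl.pron2} I apply the Bayes-type change of measure $dF_{D,X|Y}(d,x|q_\tau) = \omega_{q_\tau}(d,x)\,dF_{D,X}(d,x)$, valid under Assumption \ref{A:regular.hm}(a); for \eqref{eq:Q.mms} I use $\beta(d,x,q_\tau)\,\omega_{q_\tau}(d,x) = -\partial_d F_{Y|D,X}(q_\tau|d,x)/f_Y(q_\tau)$, where the conditional density cancels. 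The remaining ingredient is the identity $\beta^{\mathrm{CQD}}(\zeta_\tau(d,x),d,x) = \beta(d,x,q_\tau)$: differentiating $F_{Y|D,X}(Q_{Y|D,X}(\alpha|d,x)|d,x) = \alpha$ in $d$ at fixed $\alpha$ and rearranging gives $\partial_d Q_{Y|D,X}(\alpha|d,x) = -\partial_d F_{Y|D,X}(y|d,x)\big|_{y=Q_{Y|D,X}(\alpha|d,x)}/f_{Y|D,X}(Q_{Y|D,X}(\alpha|d,x)|d,x)$, and evaluating at $\alpha = \zeta_\tau(d,x)$, where $Q_{Y|D,X}(\zeta_\tau(d,x)|d,x) = q_\tau$ by definition of the matching function, delivers the claim.

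The main obstacle I anticipate is not any single computation but two regularity checks that license the formal steps. The first is verifying that the perturbation direction $h(y) = \EEc{\big}{\omega^f(y,D)\beta(D,X,y)}{Y=y}$ lies in the tangent set of $Q_\tau$, i.e.\ is continuous at $q_\tau$; this is precisely where the continuity conditions in Assumptions \ref{A:regularity}(c) and \ref{A:regular.hm}(c) and the compactness of $\supp{Y}$ are needed. The second is ensuring that the matching function $\zeta_\tau(d,x)$ is single-valued and that $Q_{Y|D,X}(\cdot|d,x)$ is differentiable in $d$, so that the implicit-differentiation identity is legitimate; both follow from the strict positivity and continuous differentiability of $F_{Y|D,X}$ assumed in Assumption \ref{A:regular.hm}(a).
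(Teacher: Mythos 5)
Your proposal is correct and follows essentially the same route as the paper: the paper's own proof simply invokes the Hadamard derivative $[Q_\tau]'_{F_Y}(h)=-h(q_\tau)/f_Y(q_\tau)$ and then applies Proposition \ref{pron:id.theta.exo}, leaving the cancellation of $f_Y(q_\tau)$, the Bayes change of measure, and the identity $\beta^{\mathrm{CQD}}(\zeta_\tau(d,x),d,x)=\beta(d,x,q_\tau)$ implicit, all of which you carry out correctly. The only cosmetic difference is that you cite Lemma 21.3 of \citet{vaart2000} where the paper uses Lemma 21.4, and you spell out the regularity checks (continuity of the direction $h$ at $q_\tau$, single-valuedness of $\zeta_\tau$) that the paper handles inside the proofs of Theorem \ref{thm:struct.uqr.reg} and Proposition \ref{pron:id.theta.exo}.
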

Equation (\ref{eq:Q.mms}) coincides with Corollary 1 of \citet{mms2024}. Furthermore, under location shifts \(\pi_t(d)=d+t\) (where \(\dot{\pi}(d)\equiv 1\)), Equation (\ref{eq:Q.ffl.pron2}) recovers Proposition 1(ii) of \citet{firpo2009}, and Equation (\ref{eq:Q.alejo}) aligns with Lemma 1 of \citet{alejo2024}. 

Proposition  \ref{pron:id.theta.exo} can also be used to establish  \citeauthor{rothe2012}'s (\citeyear{rothe2012}) Theorem 4 under the additional regularity conditions introduced therein.
\begin{corollary}\label{cory:rothe2012}
	Let the assumptions of Proposition \ref{pron:id.theta.exo} hold. For \(D^t=H_t^{-1}(F_D(D))\), suppose that for each \( t\in [0,1]\), (a) \(\supp{D^t} \subseteq \supp{D}\); (b) \(H_t\) is continuous, and $f_D > 0$ on $\supp{D}$; (c) the copula \(C(\cdot,\cdot)\) of \((D^t,X)\) is partially differentiable with respect to its first component. Then:
	\begin{itemize}
		\item[(i)] For a marginal perturbation \(H_t(d) = F_D(d) + t\,(G_D(d) - F_D(d))\), where \(G_D\) denotes a CDF of \(D\), we have
		\(\dot{\pi}(d) = -(G_D(d)-F_D(d))/f_D(d)\), and 
		\begin{align*}
			\theta_\Gamma 
			&= \int \Gamma'_{F_Y} \big(F_{Y|D,X}(\cdot|d,x)\big)\,d\left(F_{X|D}(x|d)(G_D(d)-F_D(d))\right)  \\
			&= \int \Gamma'_{F_Y}\big(F_{Y|D,X}(\cdot|d,x)\big)\,d\left(\partial_t C\big(H_t(d),F_X(x)\big)|_{t=0}\right).
		\end{align*}
		
		\item[(ii)] For a location shift \(F_{D^t}(d)=F_{D}(d-t)\), we have \(\theta_\Gamma=\Gamma'_{F_Y}\big(\E[\partial_d F_{Y|D,X}(\cdot|D,X)]\big)\).
	\end{itemize} 
\end{corollary}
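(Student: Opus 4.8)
The plan is to start from Proposition \ref{pron:id.theta.exo}, which under the maintained conditional independence $\varepsilon\indep D\,|X$ already writes $\theta_\Gamma$ as $\Gamma'_{F_Y}$ applied to an identified function, and then to specialize the weight $\dot\pi$ to the rank-preserving perturbation. First I would reduce the conditional-expectation argument of $\Gamma'_{F_Y}$ to an ordinary integral. Writing $\omega^f(y,d)=-f_Y(y)\dot\pi(d)$ and $\beta(d,x,y)=-\partial_dF_{Y|D,X}(y|d,x)/f_{Y|D,X}(y|d,x)$, and inserting the Bayes identity $f_{D,X|Y}(d,x|y)=f_{Y|D,X}(y|d,x)f_{D,X}(d,x)/f_Y(y)$, the factors $f_Y(y)$ and $f_{Y|D,X}(y|d,x)$ cancel and the argument collapses to
\begin{align*}
	h(y):=\int \dot\pi(d)\,\partial_dF_{Y|D,X}(y|d,x)\,dF_{D,X}(d,x),
\end{align*}
so that $\theta_\Gamma=\Gamma'_{F_Y}(h)$. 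This is the common core of both parts and coincides with the inner integral in (\ref{eq:Q.mms}) for the quantile case.

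For part (ii) the location shift $F_{D^t}(d)=F_D(d-t)$ corresponds to $\pi_t(d)=d+t$, hence $\dot\pi(d)\equiv 1$ and $h(y)=\EE{\big}{\partial_dF_{Y|D,X}(y|D,X)}$, which is exactly the claimed expression; no further work is needed.

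For part (i) I would first obtain $\dot\pi$ by implicitly differentiating the rank-preserving identity $H_t(\pi_t(d))=F_D(d)$ in $t$ at $t=0$. Since $\pi_0(d)=d$, $H_0=F_D$, and $\partial_tH_t(d)|_{t=0}=G_D(d)-F_D(d)$ for the linear path, this yields $(G_D(d)-F_D(d))+f_D(d)\dot\pi(d)=0$, i.e. $\dot\pi(d)=-(G_D(d)-F_D(d))/f_D(d)$. Substituting this and $dF_{D,X}=f_D(d)f_{X|D}(x|d)\,dd\,dx$ cancels $f_D$, giving $h(y)=-\int\!\int(G_D(d)-F_D(d))\,\partial_dF_{Y|D,X}(y|d,x)\,f_{X|D}(x|d)\,dd\,dx$. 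Next I would push $\Gamma'_{F_Y}$ through the integral — justified by linearity and boundedness of the Hadamard derivative together with a Bochner-integrability bound supplied by Assumption \ref{A:regular.hm}(a)--(c) — and commute it with $\partial_d$, setting $\psi(d,x):=\Gamma'_{F_Y}(F_{Y|D,X}(\cdot|d,x))$ so that $\Gamma'_{F_Y}(\partial_dF_{Y|D,X}(\cdot|d,x))=\partial_d\psi(d,x)$. An integration by parts in $d$ then transfers $\partial_d$ from $\psi$ onto $(G_D-F_D)f_{X|D}$; the boundary term vanishes because $G_D-F_D\to 0$ at the endpoints of $\supp{D}$ under conditions (a)--(b). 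Recognizing $\partial_d[(G_D(d)-F_D(d))f_{X|D}(x|d)]\,dd\,dx$ as the Stieltjes differential of $\Phi(d,x):=F_{X|D}(x|d)(G_D(d)-F_D(d))$ delivers the first displayed equality $\theta_\Gamma=\int\psi(d,x)\,d\Phi(d,x)$.

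The second (copula) equality follows by identifying $\Phi$ with $\partial_tC(H_t(d),F_X(x))|_{t=0}$. Because $D^t=H_t^{-1}(F_D(D))$ is a monotone transformation of $D$, the copula of $(D^t,X)$ is invariant in $t$ and equals $C$, so $F_{D^t,X}(d,x)=C(H_t(d),F_X(x))$. Differentiating $F_{D,X}(d,x)=C(F_D(d),F_X(x))$ in $d$ gives $\partial_1C(F_D(d),F_X(x))=F_{X|D}(x|d)$, and the chain rule in $t$ at $t=0$ then yields $\partial_tC(H_t(d),F_X(x))|_{t=0}=F_{X|D}(x|d)(G_D(d)-F_D(d))=\Phi(d,x)$; condition (c) guarantees the required differentiability of $C$. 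I expect the main obstacle to be the rigorous justification of the interchange steps in part (i) — commuting $\Gamma'_{F_Y}$ with both the integral and $\partial_d$, verifying that the boundary term vanishes, and interpreting the Stieltjes differential of $\Phi$ when $X$ is a vector rather than a scalar. These are precisely the points where the regularity conditions of Theorem \ref{thm:struct.theta}, Assumption \ref{A:regular.hm}(a)--(c), and the corollary's own conditions (a)--(c) must be invoked.
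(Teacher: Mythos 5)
Your proposal is correct and follows essentially the same route as the paper's proof: derive \(\dot\pi(d)=-(G_D(d)-F_D(d))/f_D(d)\) from the rank identity \(H_t(\pi_t(d))=F_D(d)\), reduce Proposition \ref{pron:id.theta.exo} via Bayes' rule to \(\Gamma'_{F_Y}\) applied to \(-\iint (G_D(d)-F_D(d))f_{X|D}(x|d)\,\partial_dF_{Y|D,X}(\cdot|d,x)\,dd\,dx\), integrate by parts with the boundary term vanishing since \(G_D-F_D=0\) at the endpoints, and verify \(\partial_tC(H_t(d),F_X(x))|_{t=0}=F_{X|D}(x|d)(G_D(d)-F_D(d))\) using Sklar's theorem. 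The only cosmetic difference is the order of operations in part (i): the paper performs the integration by parts pointwise in \(y\) inside the argument of \(\Gamma'_{F_Y}\) and invokes linearity only at the very last step, which avoids the additional (and slightly stronger) interchange of \(\Gamma'_{F_Y}\) with \(\partial_d\) that your version requires and that you correctly flag as needing justification.
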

Note that \citet{rothe2012} adopts the conditional independence assumption \( \varepsilon \indep  U|X\) for identification, where \(U:=F_D(D)\) represents the rank of $D$ in its distribution. This condition is equivalent to \(\varepsilon \indep D|X\)  when the policy variable $D$ is continuously distributed.

\subsection{Triangular System}
\begin{assumptionS}\label{A:struct.m.l.Tri}
	Maintain Assumption \ref{A:struct.m.l}, but replace Equation (\ref{eq:status.quo.Y}) for the status quo outcome $Y$ with the following system:
	\begin{align}
		\begin{split}
			Y&= m(D,X,\varepsilon), \\
			D&= h(Z,X,\eta),
		\end{split}  \label{eq:status.quo.tri}
	\end{align}
	where $Z$ is an observed variable taking values in $\R$, and $\eta$ denotes the unobserved disturbances in the selection equation. 
\end{assumptionS}

We establish identification using a control variable approach, following an analogous construction to that in Theorem 1 of \citet{imbensNewey2009}.
\begin{lemmaC}[Control Variable]\label{lem:CF}
	In the triangular model defined by equations (\ref{eq:status.quo.tri}), assume that (a) \((\varepsilon,\eta) \indep Z|X\); (b) \(\eta\) is one-dimensional and \(p \mapsto h(Z,X,p)\) is strictly monotonic in \(p\) with probability 1; (c) for each \(x\), \(F_{\eta|X}(\cdot|x)\) is continuous and strictly increasing on its support. Then, there exists a control variable \(V=F_{D|Z,X}(D|Z,X)=F_{\eta|X}(\eta|X)\) such that $\varepsilon \indep D \,|V, X$. 
\end{lemmaC}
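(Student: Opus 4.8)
The plan is to proceed in three steps: verify the stated identity for the control variable $V$, reduce conditioning on $(V,X)$ to conditioning on $(\eta,X)$, and then obtain the conditional independence by tracing the randomness of $D$ back to the excluded regressor $Z$.

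First I would establish $V=F_{\eta|X}(\eta|X)$. Fixing $(z,x)$ and taking $p\mapsto h(z,x,p)$ strictly increasing without loss of generality (the decreasing case replaces $F_{\eta|X}$ by $1-F_{\eta|X}$ and is handled identically), the strict monotonicity in assumption (b) gives $\{h(z,x,\eta)\leq d\}=\{\eta\leq h^{-1}(z,x,d)\}$ almost surely, where $h^{-1}$ denotes the inverse in the last argument. Hence
\[
F_{D|Z,X}(d|z,x)=\pbc{\big}{\eta\leq h^{-1}(z,x,d)}{Z=z,X=x}=F_{\eta|X}\big(h^{-1}(z,x,d)\,\big|\,x\big),
\]
where the final equality uses $\eta\indep Z|X$, itself implied by assumption (a). Evaluating at $d=D=h(Z,X,\eta)$ gives $h^{-1}(Z,X,D)=\eta$, so that $V=F_{D|Z,X}(D|Z,X)=F_{\eta|X}(\eta|X)$.

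Next I would invoke assumption (c): for each $x$ the map $\eta\mapsto F_{\eta|X}(\eta|x)$ is a continuous, strictly increasing bijection onto its range, hence invertible. Therefore $V$ is a one-to-one measurable transformation of $\eta$ given $X$, the $\sigma$-algebras generated by $(V,X)$ and by $(\eta,X)$ coincide, and it suffices to prove $\varepsilon\indep D\,|\,\eta,X$.

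The final step is the crux. The decisive observation is that, conditional on $(\eta,X)=(\eta_0,x)$, the endogenous variable $D=h(Z,x,\eta_0)$ is a deterministic measurable function of the instrument $Z$ alone. From assumption (a), the joint conditional law of $(\varepsilon,\eta)$ given $(Z,X)$ does not depend on $Z$; disintegrating this statement to extract the conditional law of $\varepsilon$ given $(\eta,X)$ yields $\varepsilon\indep Z\,|\,\eta,X$. Because $D$ is a function of $Z$ once $(\eta,X)$ is fixed, it can carry no information about $\varepsilon$ beyond that carried by $Z$, so $\varepsilon\indep D\,|\,\eta,X$, and combining this with the preceding $\sigma$-algebra equivalence gives $\varepsilon\indep D\,|\,V,X$. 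I expect the main difficulty to reside precisely in this step: rigorously deducing $\varepsilon\indep Z\,|\,\eta,X$ from the pairwise restriction $(\varepsilon,\eta)\indep Z|X$, which is cleanest when phrased through regular conditional distributions rather than by formal manipulation, and then justifying that a measurable function of $Z$ inherits the conditional independence from $Z$ itself.
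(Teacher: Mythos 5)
Your argument is correct and is essentially the proof the paper intends: the paper gives no written-out proof of Lemma C but defers to Theorem 1 of \citet{imbensNewey2009}, and your three steps (deriving $V=F_{\eta|X}(\eta|X)$ from strict monotonicity and $\eta\indep Z|X$, identifying $\sigma(V,X)$ with $\sigma(\eta,X)$ via the strict monotonicity of $F_{\eta|X}(\cdot|x)$, and passing from $(\varepsilon,\eta)\indep Z|X$ to $\varepsilon\indep Z|\eta,X$ and then to $\varepsilon\indep D|\eta,X$ since $D=h(Z,X,\eta)$) reproduce exactly the Imbens--Newey construction with $X$ added as a conditioning variable. The step you flag as delicate is just the weak-union property of conditional independence together with the fact that a measurable function of $(Z,\eta,X)$ inherits conditional independence from $Z$ given $(\eta,X)$, so no genuine gap remains.
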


\begin{proposition}\label{pron:id.theta.tri}
	Under the assumptions of Theorem \ref{thm:struct.theta}, Lemma \ref{lem:CF}, and Assumption R3 (stated in Supplementary Appendix A),  we obtain
	\begin{align*}
		\theta_\Gamma = \Gamma'_{F_Y} \Big(\EEc{\big}{\omega^f(\cdot,D)\,\beta^{\mathrm{CV}}(D,V,X,\cdot)}{Y=\cdot} \Big)
	\end{align*}
	for every Hadamard-differentiable functional \(\Gamma\), where $\beta^{\mathrm{CV}}(d,v,x,y):=-\frac{\partial_d{F_{Y|D,V,X}(y|d,v,x)}}{f_{Y|D,V,X}(y|d,v,x)}$.
\end{proposition}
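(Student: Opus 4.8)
The plan is to combine the structural representation from Theorem \ref{thm:struct.theta} with a control-variable version of the Hoderlein--Mammen identification argument, treating the inner object of the representation pointwise in $y$ and deferring the application of the Hadamard derivative to the very end. By Theorem \ref{thm:struct.theta}, whose hypotheses are inherited here and whose outcome equation $Y=m(D,X,\varepsilon)$ is maintained under Assumption \ref{A:struct.m.l.Tri}, the distributional MPE admits the representation
\begin{align*}
	\theta_{id}(y) = \EEc{\big}{\omega^f(y,D)\,\partial_d m(D,X,\varepsilon)}{Y=y},
\end{align*}
together with $\theta_\Gamma = \Gamma'_{F_Y}(\theta_{id})$. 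Since $\Gamma'_{F_Y}$ is a fixed continuous linear map, it suffices to rewrite $\theta_{id}$ in terms of the identified quantity $\beta^{\mathrm{CV}}$ and then apply $\Gamma'_{F_Y}$.

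First I would establish the control-variable analogue of the local average structural derivative identity, namely
\begin{align*}
	\beta^{\mathrm{CV}}(d,v,x,y) = \EEc{\big}{\partial_d m(d,x,\varepsilon)}{D=d,V=v,X=x,Y=y}.
\end{align*}
This is the step where the control variable does its work. By Lemma \ref{lem:CF} we have $\varepsilon \indep D \mid V,X$, so the conditional law of $\varepsilon$ given $(D,V,X)=(d,v,x)$ coincides with $F_{\varepsilon|V,X}(\cdot|v,x)$ and does not depend on $d$. Writing $F_{Y|D,V,X}(y|d,v,x) = \int \I{m(d,x,e)\le y}\,dF_{\varepsilon|V,X}(e|v,x)$ and differentiating in $d$, the conditional-independence property ensures that the $d$-derivative falls only on the integrand $\I{m(d,x,e)\le y}$ and not on the mixing measure. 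Carrying out this differentiation under the integral sign, which is licensed by the regularity conditions collected in Assumption R3 (the control-variable counterparts of Assumption \ref{A:regular.hm}(a)--(d) with $(V,X)$ in place of $X$), reproduces the Hoderlein--Mammen formula and yields the displayed identity.

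Next I would plug this identity back into $\theta_{id}$ using the law of iterated expectations. Conditioning on $(D,V,X)$ inside the conditional expectation given $Y=y$ and invoking the tower property gives
\begin{align*}
	\theta_{id}(y)
	&= \EEc{\big}{\EEc{\big}{\omega^f(y,D)\,\partial_d m(D,X,\varepsilon)}{D,V,X,Y=y}}{Y=y} \\
	&= \EEc{\big}{\omega^f(y,D)\,\beta^{\mathrm{CV}}(D,V,X,y)}{Y=y},
\end{align*}
where in the second line the weight $\omega^f(y,D)=-f_Y(y)\dot{\pi}(D)$ is pulled out of the inner expectation because it is measurable with respect to $\sigma(D)$, and the inner conditional expectation of $\partial_d m(D,X,\varepsilon)$ is replaced by $\beta^{\mathrm{CV}}(D,V,X,y)$ via the identity just established. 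Applying $\Gamma'_{F_Y}$ to both sides and using $\theta_\Gamma = \Gamma'_{F_Y}(\theta_{id})$ delivers the claimed formula.

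The main obstacle I anticipate is the rigorous justification of the differentiation-under-the-integral step in the control-variable LASD identity, since $V$ is a generated regressor whose conditional density $f_{Y|D,V,X}$ and derivative $\partial_d F_{Y|D,V,X}$ must first be shown to be well-defined and sufficiently smooth. Concretely, this requires verifying that the strict positivity and continuity of the relevant conditional density, the stochastic-differentiability bound on $m$, and the dominated-density condition all transfer to the augmented conditioning set $(D,V,X)$, which is precisely the role of Assumption R3 in the Supplementary Appendix. A secondary technical point is confirming that the interchange of $\Gamma'_{F_Y}$ with the pointwise-in-$y$ manipulations is valid; this follows from the continuity and linearity of the Hadamard derivative, since all rewriting of $\theta_{id}$ is completed pointwise before $\Gamma'_{F_Y}$ is applied.
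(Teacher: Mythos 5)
Your proposal is correct and follows essentially the same route as the paper's proof: establish the control-variable LASD identity $\beta^{\mathrm{CV}}(d,v,x,y)=\EEc{}{\partial_d m(d,x,\varepsilon)}{D=d,V=v,X=x,Y=y}$ via the Hoderlein--Mammen argument (where $\varepsilon\indep D\mid V,X$ from Lemma \ref{lem:CF} kills the term in which the mixing measure varies with $d$, corresponding to $A_{2,\delta}=0$ in the paper's three-term decomposition), then substitute into Theorem \ref{thm:struct.theta} by iterated expectations and apply $\Gamma'_{F_Y}$. The obstacle you flag --- that the indicator is not differentiable and the interchange must be justified by the stochastic-differentiability and dominated-density conditions of Assumption R3 --- is exactly what the paper's explicit $A_{1,\delta}+A_{2,\delta}+A_{3,\delta}$ expansion handles.
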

Proposition \ref{pron:id.theta.tri} presents a general identification formula for \(\theta_\Gamma\) in the triangular system (\ref{eq:status.quo.tri}), where \(\E[\beta^{\mathrm{CV}}(d,V,x,y)|D=d,X=x,Y=y] \) identifies \(\theta_{\mathrm{LASD}}(d,x,y)\) under $\varepsilon \indep D \,|V, X$. 

Let $\beta^{\mathrm{CQD}}(\alpha,d,v,x):=\partial_d Q_{Y|D,V,X}(\alpha|d,v,x)$. The following result extends Corollary \ref{cory:theta.Q.exo} to the triangular system.
\begin{corollary}\label{cory:theta.Q.tri}
	For \(\Gamma=Q_\tau\), under the assumptions of Proposition \ref{pron:id.theta.tri}  and assuming the support \(\supp{Y}\) is compact, we obtain 
	\begin{align}
		\theta_Q(\tau)
		&= \frac{-1}{f_Y(q_\tau)}\int \dot{\pi}(d)\,\partial_d F_{Y|D,V,X}(q_\tau|d,v,x)\,dF_{D,V,X}(d,v,x) \label{eq:Q.mms.tri}\\
		&= \int  \dot{\pi}(d) \Big( \omega_{q_\tau}(d,v,x)\, \beta^{\mathrm{CQD}}\big(\zeta_\tau(d,v,x),d,v,x\big)\Big)\,dF_{D,V,X}(d,v,x) \label{eq:Q.ffl.pron2.tri}\\
		&= \int \dot{\pi}(d)\, \beta^{\mathrm{CQD}}\big(\zeta_\tau(d,v,x),d,v,x\big)\,dF_{D,V,X|Y}(d,v,x|q_\tau) \label{eq:Q.alejo.tri}
	\end{align}
	for every $\tau\in(0,1)$, 
	where $\omega_{y}(d,v,x):=\frac{f_{Y|D,V,X}(y|d,v,x)}{f_Y(y)}$, and $\zeta_\tau(d,v,x):=\{\alpha\in(0,1):Q_{Y|D,V,X}(\alpha|d,v,x)=q_\tau\}$.
\end{corollary}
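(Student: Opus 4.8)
The plan is to specialize the general triangular-system identification formula of Proposition \ref{pron:id.theta.tri} to the quantile functional \(\Gamma = Q_\tau\) and then recast the resulting conditional expectation into the three stated integral representations. The argument mirrors that of Corollary \ref{cory:theta.Q.exo} almost verbatim, with the conditioning set \((D,X)\) enlarged to \((D,V,X)\) and the exogeneity \(\varepsilon \indep D\,|X\) replaced by the control-variable independence \(\varepsilon \indep D\,|V,X\) supplied by Lemma \ref{lem:CF}; thus \(\beta^{\mathrm{CV}}(d,v,x,y)\) plays the role that \(\beta(d,x,y)\) plays in the single-equation case.

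First I would compute the Hadamard derivative of the quantile map. Under Assumption \ref{A:regularity}(a) together with compactness of \(\supp{Y}\), the functional \(Q_\tau\) is Hadamard-differentiable at \(F_Y\) with derivative \(Q'_{\tau,F_Y}(h) = -h(q_\tau)/f_Y(q_\tau)\) (cf. \citet{vaart2000}). Applying this to the argument furnished by Proposition \ref{pron:id.theta.tri} gives \(\theta_Q(\tau) = -\tfrac{1}{f_Y(q_\tau)}\,\EEc{\big}{\omega^f(q_\tau,D)\,\beta^{\mathrm{CV}}(D,V,X,q_\tau)}{Y=q_\tau}\). Substituting \(\omega^f(q_\tau,d) = -f_Y(q_\tau)\dot{\pi}(d)\) cancels the \(f_Y(q_\tau)\) factor and yields the compact form \(\theta_Q(\tau) = \EEc{\big}{\dot{\pi}(D)\,\beta^{\mathrm{CV}}(D,V,X,q_\tau)}{Y=q_\tau}\), which I would write directly as an integral against \(F_{D,V,X|Y}(\cdot|q_\tau)\).

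The identity that delivers the equivalent forms is \(\beta^{\mathrm{CV}}(d,v,x,q_\tau) = \beta^{\mathrm{CQD}}(\zeta_\tau(d,v,x),d,v,x)\). I would obtain it by differentiating the defining relation \(F_{Y|D,V,X}(Q_{Y|D,V,X}(\alpha|d,v,x)|d,v,x) \equiv \alpha\) with respect to \(d\) at fixed \(\alpha\); the chain rule gives \(f_{Y|D,V,X}\cdot\partial_d Q_{Y|D,V,X} + \partial_d F_{Y|D,V,X} = 0\), hence \(\partial_d Q_{Y|D,V,X} = -\partial_d F_{Y|D,V,X}/f_{Y|D,V,X}\). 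Evaluating at \(\alpha = \zeta_\tau(d,v,x)\), the quantile level for which \(Q_{Y|D,V,X} = q_\tau\), matches the right-hand side to \(\beta^{\mathrm{CV}}(d,v,x,q_\tau)\). This step needs strict positivity and continuity of \(f_{Y|D,V,X}\) (the triangular-system analog of Assumption \ref{A:regular.hm}(a) collected in Assumption R3), both to make \(\zeta_\tau\) single-valued and to license the implicit-function differentiation. Inserting this identity into the conditional-expectation form produces Equation (\ref{eq:Q.alejo.tri}).

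Finally I would pass between the conditional and unconditional integrals by Bayes' rule, \(dF_{D,V,X|Y}(d,v,x|q_\tau) = \omega_{q_\tau}(d,v,x)\,dF_{D,V,X}(d,v,x)\) with \(\omega_{q_\tau}(d,v,x) = f_{Y|D,V,X}(q_\tau|d,v,x)/f_Y(q_\tau)\), which is justified by absolute continuity of the joint law. Carrying this weight into Equation (\ref{eq:Q.alejo.tri}) gives Equation (\ref{eq:Q.ffl.pron2.tri}); then replacing \(\beta^{\mathrm{CQD}}(\zeta_\tau(d,v,x),d,v,x)\) by the density-ratio form of \(\beta^{\mathrm{CV}}\) and cancelling \(f_{Y|D,V,X}(q_\tau|d,v,x)\) against the numerator of \(\omega_{q_\tau}\) collapses the integrand to \(-\dot{\pi}(d)\,\partial_d F_{Y|D,V,X}(q_\tau|d,v,x)/f_Y(q_\tau)\), which is Equation (\ref{eq:Q.mms.tri}). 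The only genuinely delicate point is the Hadamard differentiability of \(Q_\tau\), which is precisely why the compact-support hypothesis is imposed; once the matching identity is established, the remaining manipulations are routine.
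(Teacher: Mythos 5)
Your proposal is correct and takes essentially the same route as the paper: the paper's own proof simply invokes the Hadamard derivative of \(Q_\tau\) from Equation (\ref{eq:Q.hadamard.dev}) and applies Proposition \ref{pron:id.theta.tri}, leaving implicit exactly the steps you spell out (the cancellation of \(f_Y(q_\tau)\) against \(\omega^f\), the matching identity \(\beta^{\mathrm{CV}}(d,v,x,q_\tau)=\beta^{\mathrm{CQD}}(\zeta_\tau(d,v,x),d,v,x)\) via implicit differentiation, and the Bayes-rule passage between \(F_{D,V,X|Y}\) and \(F_{D,V,X}\)). Your write-up is a faithful, more detailed version of the same argument.
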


The following result extends \citeauthor{rothe2012}'s (\citeyear{rothe2012}) Theorem 4 to the triangular system, with part (ii) aligning with Theorem 1 of \citet{rothe2010el}. 
\begin{corollary}\label{cory:rothe2012.tri}
	Let the assumptions of Proposition \ref{pron:id.theta.tri} and Conditions (a)--(b) in Corollary \ref{cory:rothe2012} hold. For \(D^t=H_t^{-1}(F_D(D))\), suppose the copula \(C(\cdot,\cdot,\cdot)\) of \((D^t,V,X)\) is partially differentiable with respect to its first component. Then:
	\begin{itemize}
		\item[(i)] For a marginal perturbation \(H_t(d) = F_D(d) + t\,(G_D(d) - F_D(d))\), where \(G_D\) denotes a CDF of \(D\), we have
		\begin{align*}
			\theta_\Gamma 
			&= \int \Gamma'_{F_Y} \big(F_{Y|D,V,X}(\cdot|d,v,x)\big)\,d\left(F_{V,X|D}(v,x|d)(G_D(d)-F_D(d))\right)  \\
			&= \int \Gamma'_{F_Y}\big(F_{Y|D,V,X}(\cdot|d,v,x)\big)\,d\left(\partial_t C\big(H_t(d),F_V(v),F_X(x)\big)|_{t=0}\right).
		\end{align*}
		
		\item[(ii)] For a location shift \(H_t(d)=F_{D}(d-t)\), we have \(\theta_\Gamma=\Gamma'_{F_Y}\big(\E[\partial_d F_{Y|D,V,X}(\cdot|D,V,X)]\big)\).
	\end{itemize} 
\end{corollary}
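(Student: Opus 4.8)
The plan is to specialize Proposition~\ref{pron:id.theta.tri} to the rank-preserving transformation $D^t=H_t^{-1}(F_D(D))$ and then recast the argument of the Hadamard derivative in terms of the conditional CDF $F_{Y|D,V,X}$ and the copula of $(D^t,V,X)$. First I would simplify the object inside $\Gamma'_{F_Y}$. Writing $\psi(y):=\EEc{\big}{\omega^f(y,D)\,\beta^{\mathrm{CV}}(D,V,X,y)}{Y=y}$ and inserting $\omega^f(y,d)=-f_Y(y)\dot\pi(d)$, $\beta^{\mathrm{CV}}(d,v,x,y)=-\partial_d F_{Y|D,V,X}(y|d,v,x)/f_{Y|D,V,X}(y|d,v,x)$, and $f_{D,V,X|Y}(d,v,x|y)=f_{Y|D,V,X}(y|d,v,x)f_{D,V,X}(d,v,x)/f_Y(y)$, the factors $f_Y(y)$ and $f_{Y|D,V,X}$ cancel, so that
\begin{align*}
	\psi(y)=\int \dot\pi(d)\,\partial_d F_{Y|D,V,X}(y|d,v,x)\,dF_{D,V,X}(d,v,x)=\EE{\big}{\dot\pi(D)\,\partial_d F_{Y|D,V,X}(y|D,V,X)},
\end{align*}
and Proposition~\ref{pron:id.theta.tri} reads $\theta_\Gamma=\Gamma'_{F_Y}(\psi)$. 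For part~(ii), the location shift $H_t(d)=F_D(d-t)$ yields $\pi_t(d)=d+t$, hence $\dot\pi(d)\equiv 1$; then $\psi=\EE{\big}{\partial_d F_{Y|D,V,X}(\cdot|D,V,X)}$ and the claim follows at once.

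For part~(i) I would import from Corollary~\ref{cory:rothe2012} that differentiating $H_t(\pi_t(d))=F_D(d)$ at $t=0$ (using $\pi_0(d)=d$) gives $\dot\pi(d)=-(G_D(d)-F_D(d))/f_D(d)$. The crux is to transfer the derivative from $F_{Y|D,V,X}$ onto the perturbed design measure. Since $\psi$ coincides with $\theta_{id}=\partial_t F_{Y^t}(\cdot)|_{t=0}$, I would use the counterfactual-outcome representation, valid under the conditional independence $\varepsilon\indep D\,|\,V,X$ furnished by Lemma~\ref{lem:CF} and the support condition $\supp{D^t}\subseteq\supp{D}$,
\begin{align*}
	F_{Y^t}(y)=\int F_{Y|D,V,X}(y|d,v,x)\,dF_{D^t,V,X}(d,v,x),
\end{align*}
and then differentiate under the integral to move the $t$-derivative onto the measure,
\begin{align*}
	\psi(y)=\int F_{Y|D,V,X}(y|d,v,x)\,d\big[\partial_t F_{D^t,V,X}(d,v,x)|_{t=0}\big].
\end{align*}
A direct computation of the perturbed joint CDF, using $F_{D^t,V,X}(d,v,x)=F_{D,V,X}(\pi_t^{-1}(d),v,x)$, $\partial_d F_{D,V,X}(d,v,x)=f_D(d)\,F_{V,X|D}(v,x|d)$, and $\partial_t\pi_t^{-1}(d)|_{t=0}=-\dot\pi(d)=(G_D(d)-F_D(d))/f_D(d)$, then gives
\begin{align*}
	\partial_t F_{D^t,V,X}(d,v,x)|_{t=0}=F_{V,X|D}(v,x|d)\,(G_D(d)-F_D(d)),
\end{align*}
which is precisely the function inducing the signed measure in the first displayed identity. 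For the second identity I would invoke the invariance of copulas under the strictly increasing map $D\mapsto D^t$, so that the copula $C$ of $(D^t,V,X)$ coincides with that of $(D,V,X)$ and is independent of $t$; combined with $F_{D^t}(d)=H_t(d)$ this rewrites the same quantity as $\partial_t C(H_t(d),F_V(v),F_X(x))|_{t=0}$.

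The main obstacle is the interchange of the continuous linear functional $\Gamma'_{F_Y}$ with the integral over $(d,v,x)$, turning $\Gamma'_{F_Y}\big(\int F_{Y|D,V,X}(\cdot|d,v,x)\,d\Lambda\big)$ into $\int \Gamma'_{F_Y}\big(F_{Y|D,V,X}(\cdot|d,v,x)\big)\,d\Lambda$, where $\Lambda$ denotes the signed measure just identified. Because $\Gamma'_{F_Y}$ is linear and bounded but the integral is not a finite sum, this step calls for a Bochner-integral/approximation argument: approximate $\Lambda$ by finite partitions, apply linearity, and pass to the limit using continuity of $\Gamma'_{F_Y}$ in $\|\cdot\|_{\mathbb{D}}$ together with the integrability supplied by Assumption R3 and the copula differentiability. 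The same regularity, the support condition, and the continuity of $H_t$ are what justify differentiating under the integral sign (dominated convergence) and the vanishing of the implicit boundary terms. The remaining pieces --- the density cancellation, the inverse-function derivative, and the copula identity --- are routine once this interchange and its attendant regularity are secured.
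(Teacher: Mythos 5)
Your proof is correct and reaches the paper's conclusions, but the middle step of part (i) takes a genuinely different route. The paper stays entirely inside the identification formula of Proposition \ref{pron:id.theta.tri}: after substituting $\dot{\pi}(d)=-(G_D(d)-F_D(d))/f_D(d)$ and cancelling densities, it performs an integration by parts in $d$ to move the derivative off $\partial_d F_{Y|D,V,X}$ and onto the weight $f_{V,X|D}(v,x|d)(G_D(d)-F_D(d))$, which directly produces the signed measure $d\big(F_{V,X|D}(v,x|d)(G_D(d)-F_D(d))\big)$. You instead detour through the counterfactual representation $F_{Y^t}(y)=\int F_{Y|D,V,X}(y|d,v,x)\,dF_{D^t,V,X}(d,v,x)$ --- valid here because the control-variable independence $\varepsilon\indep D\,|\,V,X$ delivers the distributional invariance $F_{Y^t|D^t,V,X}=F_{Y|D,V,X}$ on $\supp{D^t}\subseteq\supp{D}$ --- and then differentiate the perturbed design measure, computing $\partial_t F_{D^t,V,X}(d,v,x)|_{t=0}=F_{V,X|D}(v,x|d)(G_D(d)-F_D(d))$ via $\pi_t^{-1}$. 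This is closer in spirit to Rothe's original derivation and makes the role of the conditional-independence assumption more transparent, at the cost of having to justify passing a $t$-derivative through an integral against a $t$-varying measure (which, made rigorous, essentially amounts to the same integration by parts the paper uses). Your terminal copula computation via Sklar's theorem and invariance under the increasing map $D\mapsto D^t$ matches the paper's (which writes $\pi_t^{-1}(d)$ as $Q_D(H_t(d))$), your part (ii) is identical, and you are in fact more careful than the paper about the interchange $\Gamma'_{F_Y}\big(\int(\cdot)\,d\Lambda\big)=\int\Gamma'_{F_Y}(\cdot)\,d\Lambda$, which the paper dispatches with a bare appeal to linearity.
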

Notably, \citet{baltagi2019} analyze counterfactual changes in the conditional distribution \(F_{D|V}\) within a triangular system, rather than changes in the marginal distribution \(F_D\). Consequently, their results do not follow from Proposition \ref{pron:id.theta.tri} in our framework.

We conclude this section by presenting in Table \ref{tb:theta.id} a summary of our identification results.
\begin{table}[htb]
	\centering
	\fontsize{9.5}{18}\selectfont
	\begin{threeparttable}
		\caption{Identification of \(\theta_\Gamma\) derived from the LASD method}\label{tb:theta.id}
		\begin{tabular}{l l l l}
			\hline 
			\multicolumn{4}{l}{Panel A. Single-Equation Model}  \\ 
			\multicolumn{4}{l}{\(\theta_{\mathrm{LASD}}(d,x,y) = \beta(d,x,y)\)  under \(\varepsilon \indep D|X\)}  \\ 
			Intervention & Functional  & Identification & Related Literature \\
			\hline 
			\(\pi_t(D)\) & \(\Gamma\)  & Proposition \ref{pron:id.theta.exo} & ------ \\
			\(\pi_t(D)\) & \(Q_\tau\)  & Corollary \ref{cory:theta.Q.exo}: Eq.(\ref{eq:Q.mms}) &  \citet{mms2024}\\
			\(D+t\) & \(Q_\tau\) & Corollary \ref{cory:theta.Q.exo}: Eqs.(\ref{eq:Q.ffl.pron2}) (\ref{eq:Q.alejo}) &  \citet{firpo2009,alejo2024}\\
			\(H^{-1}_t(F_D(D))\) & \(\Gamma\) & Corollary \ref{cory:rothe2012} &  \citet{rothe2012}\\
			\hline 
			\multicolumn{4}{l}{Panel B. Triangular System} \\
			\multicolumn{4}{l}{\(\theta_{\mathrm{LASD}}(d,x,y)=\E[\beta^{\mathrm{CV}}(d,V,x,y)|D=d,X=x,Y=y]\) with the control function \(V=F_{D|Z,X}(D|Z,X)\)} \\
			\hline 
			\(\pi_t(D)\) & \(\Gamma\)  & Proposition \ref{pron:id.theta.tri} & ------ \\
			\(\pi_t(D)\) & \(Q_\tau\)  & Corollary \ref{cory:theta.Q.tri} & ------ \\
			\(D+t\) & \(\Gamma\)  & Corollary \ref{cory:rothe2012.tri}(ii) & \citet{rothe2010el}\\
			\(H^{-1}_t(F_D(D))\) & \(\Gamma\) & Corollary \ref{cory:rothe2012.tri}(i) & ------ \\
			\hline 
		\end{tabular}	
		\begin{tablenotes}\footnotesize
			\item \textit{Notes:} The identification results for \(\theta_{\mathrm{LASD}}\) are presented in the upper portion of each panel. Then, \(\theta_\Gamma\) is identified via \(\theta_\Gamma
			= \Gamma'_{F_Y} \big( \E[\omega^f(\cdot,D)\,\theta_{\mathrm{LASD}}(D,X,\cdot)|Y=\cdot]\big)\), which follows from Theorem \ref{thm:struct.theta}.
		\end{tablenotes}
	\end{threeparttable}
\end{table}

%******************************************
%  A brief discussion for estimation
%******************************************
\section{Overview of Estimation Approaches} \label{sec:discuss.esti}
The general identification results in Section \ref{sec:apply} motivate the following estimation procedure for  \(\theta_\Gamma\):
\begin{itemize}
	\item[1.]
	For specified \(\pi_t\) and \(\Gamma\), compute the Hadamard derivative \(\Gamma'_{F_Y}\) and simplify the identification formula according to Proposition \ref{pron:id.theta.exo} or \ref{pron:id.theta.tri}.
	
	\item[2.]
	Implement the plug-in estimator based on the simplified formula. For high-dimensional covariates \(X\), employ the \textit{Neyman orthogonal score} approach instead.
\end{itemize} 

As an illustration, Corollary \ref{cory:theta.Q.exo} yields the following nonparametric estimators for the quantile MPE:
\begin{align}
	\hat{\theta}_Q(\tau) 
	&= -\frac{\frac{1}{n}\sum_{i=1}^{n}\dot{\pi}(D_i)\widehat{\partial_dF}_{Y|D,X}(\hat{q}_\tau|D_i,X_i)}{\hat{f}_Y(\hat{q}_\tau)} \label{eq:esti.theta.Q.np}\\
	&=\frac{\sum_{i=1}^{n}\dot{\pi}(D_i)\,\hat{\beta}^{\mathrm{CQD}}(\hat{\zeta}_\tau(D_i,X_i),D_i,X_i)K\left(\frac{Y_i-\hat{q}_\tau}{h}\right)}{\sum_{i=1}^{n}K\left(\frac{Y_i-\hat{q}_\tau}{h}\right)}, \label{eq:esti.theta.Q.alejo}
\end{align}
where \(K(\cdot)\) denotes a kernel function. When \(\pi_t(d)=d+t\), Equation (\ref{eq:esti.theta.Q.np}) corresponds to the RIF-NP estimator of \citet{firpo2009}, and Equation (\ref{eq:esti.theta.Q.alejo}) is the reweighting estimator proposed by \citet{alejo2024}. See their respective papers for detailed computational procedures of each estimator.

When the covariate vector \(X\) is high-dimensional, the nonparametric first steps (e.g., \(\partial_dF_{Y|D,X}\)) typically lead to large biases in the plug-in estimator \(\hat{\theta}_\Gamma\) (\citealp{ddml2018,localRobust2022}). 
Using the orthogonal scores helps reduce the sensitivity with respect to the first steps to estimate \(\theta_\Gamma\).
We continue with the example of \(\theta_{Q}(\tau)\).
Let \(w:=(y,d,x)\) and \(\gamma_0(d,x,\tau):=F_{Y|D,X}(q_\tau|d,x)\). The orthogonal score of \(\theta_Q\) is given by
\begin{align*}
	\psi_{Q}(w,\theta,\gamma,\tau) := -\frac{\dot{\pi}(d)\partial_d\gamma(d,x,\tau)}{f_Y(q_\tau)} -\theta + \alpha(d,x)\frac{\I{y\leq q_\tau}-\gamma(d,x,\tau)}{f_Y(q_\tau)},
\end{align*}
where \(\alpha(d,x):= \frac{\partial_d[\dot{\pi}(d)f_{D,X}(d,x)]}{f_{D,X}(d,x)}\) denotes the Riesz representer. The function $\psi_{Q}$ satisfies the moment condition: \(\E[\psi_{Q}(W,\theta_Q(\tau),\gamma_0,\tau)]=0\), and the Neyman orthogonality condition:
\( \partial_\delta\E[\psi_{Q}(W,\theta_Q(\tau),\gamma+\delta(\gamma-\gamma_0),\tau)]|_{\delta=0}=0\)   
for every \(\gamma\) in a convex subset of some normed vector space. Then, the debiased estimator of \(\theta_Q(\tau)\) is given by 
\begin{align*}
	\check{\theta}_Q(\tau)
	&= -\frac{\frac{1}{n}\sum_{i=1}^{n}\left\{ \dot{\pi}(D_i)\partial_d\hat{F}_{Y|D,X}(\hat{q}_\tau|D_i,X_i) - \hat{\alpha}(D_i,X_i)\big(\I{Y_i \leq \hat{q}_\tau} - \hat{F}_{Y|D,X}(\hat{q}_\tau|D_i,X_i)\big) \right\}}{\hat{f}_Y(\hat{q}_\tau)},
\end{align*} 
where \(\hat{\alpha}\) can be computed using the automatic estimation approach (\citealp{adml2022}), and \(\hat{F}_{Y|D,X}\) can be estimated via the post-Lasso penalized logistic regression (\citealp{sasaki2022}). 
When \(\pi_t(d)=d+t\), \(\check{\theta}_Q(\tau)\) corresponds to the high dimensional UQR estimator proposed by \citet{sasaki2022}.

In summary, our identification framework delivers plug-in estimators for general MPEs \(\theta_\Gamma\) in low-dimensional settings. In high-dimensional settings, constructing debiased estimators requires \(\Gamma\)-specific Neyman orthogonal scores. To our knowledge, developed orthogonal score functions are lacking even for commonly used inequality-related MPEs,\footnote{In the binary treatment setting, \citet{firpo2016} develop semiparametrically efficient estimators for inequality treatment effects.} such as the Gini coefficient and the Theil index.

%******************************************
% 	Conclusion
%******************************************
\section{Conclusion} \label{sec:conclusion}
This paper demonstrates that, under regularity conditions, the marginal policy effect (MPE) for a Hadamard-differentiable functional equals its functional derivative evaluated at the outcome-conditioned weighted average structural derivative. This equivalence is definitional rather than dependent on specific identification assumptions. The main theorem offers an alternative identification strategy for the MPE, using the identification of the Local Average Structural Derivative (LASD) as an intermediate step.

This study points to several avenues for future research. First, the LASD identification framework can be extended to other nonseparable structural models, such as panel data models, where identifying the relevant LASD objects poses the central theoretical challenge. Second, developing debiased estimators for commonly used inequality-related MPEs (e.g., the Gini coefficient) in high-dimensional settings remains an important avenue for future work.

% *********************************************
% 	Reference
% *********************************************

%************************************************
%		SUPPLEMENT
%************************************************

\begin{titlepage}
	\onehalfspacing
	\hypersetup{linkcolor=Black}
	
	% TITLE AND AUTHORS ////////////////////////////
	\begin{center}
		\Large
		Supplementary Appendix to ``\maintitle''
		
		\vspace{1.5em}
		\normalsize\scshape
		Zhixin Wang, Yu Zhang, and Zhengyu Zhang
	\end{center}
	
	% ABSTRACT /////////////////////////////////////
	\renewcommand{\abstractname}{}
	\begin{abstract}
		In this supplementary appendix, Section \ref{sec:pf.main} provides the proofs of the main results, while Section \ref{sec:discussion} presents additional results on the average and Gini marginal policy effects that were omitted from the main text.
	\end{abstract}

	% TABLE OF CONTENTS ////////////////////////////
	\vspace{1cm}
	\ToC
	\thispagestyle{empty}
	
\end{titlepage}

%**************************************
% Appendix
%**************************************
\renewcommand{\thepage}{S.\arabic{page}}
\begin{appendices}

	\allowdisplaybreaks
	
	%***************************************************
	\section{Proof of Main Results}\label{sec:pf.main}
	%***************************************************
	\renewcommand{\theequation}{A.\arabic{equation}}
	\setcounter{equation}{0}
	Analogous to Assumption R2(a)--(c), we introduce the following conditions for identification using the control variable \(V\). 
	\setcounter{assumptionR}{2}
	\begin{assumptionR}[Regularity] \label{A:regular.hm.cf}
		For each $(d,v,x)\in\supp{D}\times  \supp{V} \times \supp{X}$:
		\begin{itemize}
			\item[(a)] The conditional CDF of \(Y\) given \((D,V,X)\) is absolutely continuous, and its density $f_{Y|D,V,X}(\cdot|d,v,x)$ is strictly positive, bounded and continuous on its support. Moreover, $ F_{Y|D,V,X}(y|d,v,x)$ is continuously differentiable in \(d\) for each $(y,v,x)$, with derivative denoted by \(\partial_dF_{Y|D,V,X}(\cdot|\cdot,\cdot,\cdot)\).
			
			\item[(b)] $\partial_d m(d,x,\cdot)$ is measurable and satisfies
			\[
			\pbc{\Big}{\big| m(d+\delta,x,\varepsilon) - m(d,x,\varepsilon) - \delta\,\partial_d m(d,x,\varepsilon) \big| \geq \nu\delta}{D=d,V=v,X=x} = o(\delta)
			\]
			as $\delta \downarrow 0$ for every $\nu > 0$.
			\item[(c)] The conditional CDF of \((Y,\partial_d m(d,x,\varepsilon))\) given \((D,V,X)\) is absolutely continuous, and its density $f_{Y,\partial_dm^{d,x}|D,V,X}(\cdot,y'|d,v,x)$ is continuous for each $(y',d,v,x)$. Moreover, there exists a Lebesgue integrable function $g:\R\to\R$ satisfying $\int |y'g(y')|\,dy'<\infty$, such that  $f_{Y,\partial_dm^{d,x}|D,V,X}(y,y'|d,v,x) \leq C |g(y')|$ for some constant $C$.
		\end{itemize}
	\end{assumptionR}
	%//////////////////////////////////
	\subsection{Proof of Theorem \ref{thm:struct.theta}}
	\begin{proof}
		Under Assumptions S1 and R1, we have
		\begin{align}
			F_{Y^t}(y) - F_Y(y)
			&=\pbb{\big}{m(\pi_t(D),X,\varepsilon)\leq y} - \pbb{\big}{Y \leq y} \notag\\
			&=\pbb{\big}{Y\leq y - \{m(\pi_t(D),X,\varepsilon)-m(D,X,\varepsilon)\}} - \pbb{\big}{Y \leq y} \notag\\
			&=\pbb{\big}{y < Y \leq y - \{m(\pi_t(D),X,\varepsilon)-m(D,X,\varepsilon)\}} \notag\\
			&\quad - \pbb{\big}{y - \{m(\pi_t(D),X,\varepsilon)-m(D,X,\varepsilon)\} < Y \leq y} \notag\\
			&\eq{1}{=}\pbb{\big}{y \leq Y \leq y - t\,\dot{m}(D,X,\varepsilon)} \notag\\
			&\quad - \pbb{\big}{y - t\,\dot{m}(D,X,\varepsilon) \leq Y \leq y} + o(t) \notag\\
			&=\pbb{\bigg}{Y \geq y,\, \dot{m}(D,X,\varepsilon) \leq \frac{y-Y}{t}} \notag\\
			&\quad -\pbb{\bigg}{Y \leq y,\, \dot{m}(D,X,\varepsilon) \geq \frac{y-Y}{t}} + o(t) \notag\\
			&= \int_{y}^{\infty}\int_{-\infty}^{\frac{y-a}{t}} f_{Y,\dot{m}}(a,y') \,dy'da \notag\\
			&\quad - \int_{-\infty}^{y}\int_{\frac{y-a}{t}}^{\infty}f_{Y,\dot{m}}(a,y') \,dy'da  + o(t) \notag\\
			&\eq{2}{=} t \int_{-\infty}^{0}\int_{-\infty}^{u}f_{Y,\dot{m}}(y-tu,y') \,dy'du \notag\\
			&\quad - t \int_{0}^{\infty}\int_{u}^{\infty}f_{Y,\dot{m}}(y-tu,y') \,dy'\,du + o(t) \notag\\
			&\eq{3}{=} t \int_{-\infty}^{0}\int_{y'}^{0} f_{Y,\dot{m}}(y-tu,y') \,du\,dy' \notag\\
			&\quad - t \int_{0}^{\infty}\int_{0}^{y'} f_{Y,\dot{m}}(y-tu,y') \,du\,dy' + o(t) \notag\\
			&=t \int_{-\infty}^{\infty}\int_{y'}^{0} f_{Y,\dot{m}}(y-tu,y') \,du\,dy' + o(t) \notag
		\end{align}
		as \(t\downarrow 0\), where equality (1) follows from Assumptions R1(b); equality (2) follows from the change of variable \(u=(y-a)/t\); equality (3) follows from interchanging the order of integration.  
		Then, dividing both sides by \(t\) and taking the limit \(t\downarrow 0\) implies that
		\begin{align*}
			\theta_{id}(y) 
			&= \lim_{t\downarrow 0} \int_{-\infty}^{\infty}\int_{y'}^{0} f_{Y,\dot{m}}(y-tu,y') \,du\,dy' \\
			&\eq{4}{=} \int_{-\infty}^{\infty}\int_{y'}^{0} \lim_{t \downarrow 0} f_{Y,\dot{m}}(y-tu,y') \,du\,dy' \\
			&= -\int_{-\infty}^{\infty} y'\,f_{Y,\dot{m}}(y,y') \,dy' \\
			&\eq{5}{=} -\EEc{\big}{f_Y(y)\dot{m}(D,X,\varepsilon)}{Y=y} \\
			&= \EEc{\big}{\omega^f(y,D)\partial_d m(D,X,\varepsilon)}{Y=y}
		\end{align*}
		for each \(y \in \supp{Y}\), where equality (4) follows by using the dominance convergence theorem (DCT) under Assumption R1(c); equality (5) follows from Assumption R1(a). This proves the first part.
		
		For the second part, since \(\Gamma(F_Y)\) is Hadamard differentiable at \(F_Y\) with derivative \(\Gamma'_{F_Y}\), we obtain that
		\begin{align*}
			\theta_\Gamma 
			&= \lim_{t \downarrow 0} \frac{\Gamma(F_{Y} + t\,h_t)-\Gamma(F_Y)}{t} \\
			&= \Gamma'_{F_Y}\big(\theta_{id}\big)\\
			&= \Gamma'_{F_Y}\big(\EEc{\big}{\omega^f(\cdot,D)\partial_d m(D,X,\varepsilon)}{Y=\cdot}\big),
		\end{align*}
		where \(h_t:=\frac{F_{Y^t}-F_Y}{t} \to \theta_{id}\) as \(t \downarrow 0\). This completes the proof.
	\end{proof}

	%//////////////////////////////////
	\subsection{Proof of Theorem \ref{thm:struct.uqr.reg}}
	\begin{proof}
		Let \(q_\alpha(d,x):=Q_{Y|D,X}(\alpha|d,x)\). Under Assumption R2(a),
		differentiating both sides of the equality \(\alpha \equiv F_{Y|D,X}(q_\alpha(d,x)|d,x) \) with respect to \(d\) yields
		\begin{align*}
			\partial_d q_\alpha(d,x) = -\frac{\partial_dF_{Y|D,X}(q_\alpha(d,x)|d,x)}{f_{Y|D,X}(q_\alpha(d,x)|d,x)}.
		\end{align*}
		Then, applying Theorem 4.1 of \citet{hm2009}, we obtain 
		\begin{align*}
			&-\partial_d F_{Y|D,X}(q_\alpha(d,x)|d,x)\\
			&\; = f_{Y|D,X}(q_\alpha(d,x)|d,x)  \EEc{\big}{\partial_dm(d,x,\varepsilon)}{D=d,X=x,Y=q_\alpha(d,x)}\\
			&\quad  - \EEc{\big}{\I{m(d,x,\varepsilon)\leq q_\alpha(d,x)}\,\partial_d\ln\big(f_{\varepsilon|D,X}(\varepsilon|d,x)\big)}{D=d,X=x}.
		\end{align*}
		
		Let \(q_\tau :=Q_Y(\tau)\). Note that \(f_{Y|D,X}(\cdot|d,x) > 0\) in Assumption R2(a) implies that \(\zeta_\tau(d,x):=\{\alpha\in(0,1):q_\alpha(d,x)=q_\tau\}\) is a singleton. Hence, setting \(\alpha = \zeta_\tau(d,x)\) and integrating with respect to \(F_{D,X}\) implies 
		\begin{align*}
			-\EE{\big}{\partial_d F_{Y|D,X}(q_\tau|D,X)}
			&= \int f_{Y|D,X}(q_\tau|d,x)\int  \partial_dm(d,x,e)\,f_{\varepsilon|D,X,Y}(e|d,x,q_\tau)\,de\,dF_{D,X}(d,x)  \\
			&\quad - \EE{\big}{\I{m(D,X,\varepsilon)\leq q_\tau}\,\partial_d\ln\big(f_{\varepsilon|D,X}(\varepsilon|D,X)\big)}\\
			&=\int \partial_dm(d,x,e)\,dF_{D,X,\varepsilon|Y}(d,x,e|q_\tau) \cdot f_Y(q_\tau) \\
			&\quad - \EE{\big}{\I{m(D,X,\varepsilon)\leq q_\tau}\,\partial_d\ln\big(f_{\varepsilon|D,X}(\varepsilon|D,X)\big)}.
		\end{align*}
		Under Assumption R1(a), dividing both sides by \(f_Y(q_\tau)\) completes the proof.
	\end{proof}
	%/////////////////////////////////////
	\subsection{Proof of Corollary \ref{cory:struct.theta.Q}}
	\begin{proof}
		Under the assumptions given in the corollary, applying Lemma 21.4 of \citet{vaart2000} implies that $Q_\tau(F_Y)$ is Hadamard differentiable at $F_Y$, with derivative
		\begin{align}
			[Q_\tau]'_{F_Y}(h) = - \bigg(\frac{h}{f_Y}\circ Q_Y\bigg)(\tau) = -\frac{h(q_\tau)}{f_Y(q_\tau)}, \quad q_\tau:=Q_Y(\tau). \label{eq:Q.hadamard.dev}
		\end{align}
		where \(h\in\ell^\infty(\supp{Y})\). Therefore, \(\omega^f(q_\tau,d) = -f_Y(q_\tau)\,\dot{\pi}(d)\) and 
		\begin{align*}
			\theta_Q(\tau) 
			&= -\frac{ \EEc{\big}{\omega^f(q_\tau,D)\, \partial_d m(D,X,\varepsilon)}{Y=q_\tau} }{f_Y(q_\tau)}\\
			&= \EEc{\big}{\dot{\pi}(D)\,\partial_d m(D,X,\varepsilon)}{Y=q_\tau}.
		\end{align*}
		This completes the proof.
	\end{proof}
	
	%/////////////////////////////////////
	\subsection{Proof of Proposition \ref{pron:id.theta.exo}}
	\begin{proof}
		First, under the conditional independence \(\varepsilon \indep D |X\), applying Theorem 2 or \citet[Theorem 2.1]{hm2007} gives
		\begin{align*}
			\theta_{\mathrm{LASD}}(d,x,q_\alpha(d,x))
			= \beta^{\mathrm{CQD}}(\alpha,d,x)
			= 
			-\frac{\partial_d F_{Y|D,X}(q_\alpha(d,x)|d,x) }{ f_{Y|D,X}(q_\alpha(d,x)|d,x)}.
		\end{align*}  
		Setting \(\alpha=F_{Y|D,X}(y|d,x)\) yields \(q_\alpha(d,x)=y\), and thus
		\[\theta_{\mathrm{LASD}}(d,x,y) = -\frac{\partial_d F_{Y|D,X}(y|d,x) }{f_{Y|D,X}(y|d,x)}=:\beta(d,x,y)\]
		for each \(y \in \supp{Y|D,X}\).  Therefore, applying Theorem 1, we obtain that
		\begin{align*}
			\theta_\Gamma 
			&= \Gamma'_{F_Y}\big(\EEc{\big}{\omega^f(\cdot,D)\,\partial_d m(D,X,\varepsilon)}{Y=\cdot}\big) \\
			&= \Gamma'_{F_Y}\big(\EEc{\big}{\omega^f(\cdot,D)\,\underbrace{\E[\partial_dm(D,X,\varepsilon)|D,X,Y=\cdot]}_{=: \theta_{\mathrm{LASD}}(D,X,\cdot)}}{Y=\cdot}\big)\\
			&= \Gamma'_{F_Y}\big(\EEc{\big}{\omega^f(\cdot,D)\, \beta(D,X,\cdot)}{Y=\cdot}\big) .
		\end{align*}
		This completes the proof.
	\end{proof}
	
	% ////////////////////////////////////
	\subsection{Proof of Corollary \ref{cory:theta.Q.exo}}
	\begin{proof}
		The Hadamard derivative of \(\Gamma=Q_\tau\) is given by Equation (\ref{eq:Q.hadamard.dev}). Then, applying Proposition 1 completes the proof.
	\end{proof}
	
	%/////////////////////////////////////
	\subsection{Proof of Corollary \ref{cory:rothe2012}}
	\begin{proof}
		\textit{(i)}
		Let \(U:=F_D(D)\) denote the rank of \(D\). Setting the policy function as \(\pi_t(d)=H_t^{-1}(F_D(d))\), we have \(D^t:=\pi_t(D)=H_t^{-1}(U)\), and under the condition that \(H_t\) is continuous, \(H_t(D^t)=H_t(H_t^{-1}(U))=U\).  Next, for a given direction CDF \(G_D\), let \(H_t(d)=F_D(d) + t\,(G_D(d)-F_D(d))\) denote a marginal perturbation. Under the condition \(\supp{D_t} \subseteq \supp{D}\), setting \(d=D^t\) yields
		\begin{align*}
			U = F_{D}(\pi_t(D)) + t\cdot(G_D(\pi_t(D)) - F_D(\pi_t(D))). 
		\end{align*} 
		Differentiating both sides with respect to \(t\) at \(t=0\) implies that
		\begin{align*}
			0 &= f_D(D)\dot{\pi}(D) + G_D(D) - F_D(D) \tag{\(\pi_0(d)\equiv d\)} \\
			\implies 
			\dot{\pi}(D) &= -\frac{G_D(D)-F_D(D)}{f_D(D)} \tag{\(f_D > 0\)}, 
		\end{align*}
		where \(\dot{\pi}(d):=\partial_t\pi_t(d)|_{t=0}\).  
		Applying Proposition 1, we have 
		\begin{align*}
			\theta_\Gamma 
			&= \Gamma'_{F_Y}\bigg( \iint f_Y(\cdot)\,\frac{G_D(d)-F_D(d)}{f_D(d)}\,\bigg(-\frac{\partial_d F_{Y|D,X}(\cdot|d,x) }{f_{Y|D,X}(\cdot|d,x)}\bigg) f_{D,X|Y}(d,x|\cdot)\,dd\,dx \bigg) \\
			&=\Gamma'_{F_Y}\bigg( -\iint f_{X|D}(x|d) (G_D(d)-F_D(d))\,\partial_d F_{Y|D,X}(\cdot|d,x) \,dd\,dx \bigg)\\
			&=\Gamma'_{F_Y}\bigg( \iint F_{Y|D,X}(\cdot|d,x) \,d_d [f_{X|D}(x|d) (G_D(d)-F_D(d))]\,dx \bigg)\\
			&= \Gamma'_{F_Y}\bigg( \int F_{Y|D,X}(\cdot|d,x) \,d\big(F_{X|D}(x|d) (G_D(d)-F_D(d))\big) \bigg) \\
			&=  \int \Gamma'_{F_Y}\big(F_{Y|D,X}(\cdot|d,x)\big) \,d\big(F_{X|D}(x|d) (G_D(d)-F_D(d))\big)
		\end{align*}
		where the third equality follows from integration by parts; note that both \(G_D\) and \(F_D\) are CDFs, which implies \(G_D(\cdot)-F_D(\cdot)=0\) on the boundary of the support. The last equality follows from the linearity of the Hadamard derivative \(\Gamma'_{F_Y}\). 
		
		We next derive Theorem 4 of \citet{rothe2012} to verify our claim.
		Let \(C:[0,1]^{1+d_x}\to[0,1]\) denote the copula. 
		\citet{rothe2012} provides
		\begin{align*}
			\theta_\Gamma 
			&= \int \Gamma'_{F_Y}\big(F_{Y|D,X}(\cdot|d,x)\big)\,d\big(\partial_1C(F_D(d),F_X(x))(G_D(d)-F_D(d))\big)\\
			&=\int \Gamma'_{F_Y}\big(F_{Y|D,X}(\cdot|d,x)\big)\,d\big(\partial_tC(H_t(d),F_X(x))|_{t=0}\big).
		\end{align*}
		
		Note that \(H_t\) is the CDF of counterfactual policy \(D^t\), by Sklar's theorem (see, e.g., \citet{fanCopula2014}), there exist a copula \(C:[0,1]^{1+d_\mathsf{x}}\to[0,1]\) such that
		\[ C\big(H_t(d),F_X(x)\big) = F_{D^t,X}(d,x) ,\]
		for every \((d,x) \in \supp{D^t} \times \supp{X}\). Differentiating both sides with respect to \(t\) at \(t=0\) implies that
		\begin{align*}
			&\partial_t C\big(H_t(d),F_X(x)\big)|_{t=0} = \partial_t F_{D^t,X}(d,x) |_{t=0} \\
			&\;= \partial_t \pbb{\big}{H_t^{-1}(F_D(D))\leq d,\, X \leq x}|_{t=0} \tag{\(\pi_t(\cdot)=H_t^{-1}(F_D(\cdot))\)}\\
			&\;= \partial_t \pbb{\big}{F_D(D)\leq H_t(d),\, X \leq x}|_{t=0} \tag{\(H_t\) is continuous}\\
			&\;= \partial_t \pbb{\big}{D\leq Q_D(H_t(d)),\, X \leq x}|_{t=0}  \tag{\(f_D > 0\)}\\
			&\;= \partial_t F_{D,X}(Q_D(H_t(d)),x)\\
			&\;= \partial_d F_{D,X}(d,x) \cdot \partial_\tau Q_D(\tau)|_{\tau=F_{D}(d)} \cdot \partial_tH_t(d)|_{t=0} \tag{\(F_{D^0} \equiv F_D\)}\\
			&\,\eq{1}{=} \partial_d\bigg\{\int^{d}\!\!\int^{x} f_{D,X}(d',x')\,dx'\,dd'\bigg\} \cdot \frac{1}{f_D(Q_D(\tau))}\Big|_{\tau=F_D(d)}\cdot\big(G_D(d)-F_D(d)\big) \\
			&\;=\int^{x} f_{D,X}(d,x')\,dx'\cdot \frac{ G_D(d)-F_D(d)}{f_D(d)} \\
			&\;=F_{X|D}(x|d)\,\big(G_D(d)-F_D(d)\big),
		\end{align*}
		where equality (1) follows from differentiating both sides with respect to \(\tau\) of identity \(\tau \equiv F_D(Q_D(\tau))\). This proves part (i) of the corollary.
		
		\textit{(ii)} The location shift \(H_t(d)=F_D(d-t)=\pb[D\leq d-t]\) is equivalent to the counterfactual change \(\pi_t(D)=D+t\), and thus \(\dot{\pi}(d)\equiv1\). Then, applying Proposition 1 directly yields that
		\begin{align*}
			\theta_\Gamma
			&=\Gamma'_{F_Y}\bigg( \iint f_Y(\cdot)\,\frac{\partial_d F_{Y|D,X}(\cdot|d,x) }{f_{Y|D,X}(\cdot|d,x)} f_{D,X|Y}(d,x|\cdot)\,dd\,dx \bigg)\\
			&=\Gamma'_{F_Y}\Big( \EE{\big}{\partial_d F_{Y|D,X}(\cdot|D,X)} \Big).
		\end{align*}
		This completes the proof.
	\end{proof}

	%/////////////////////////////////////
	\subsection{Proof of Proposition \ref{pron:id.theta.tri}}
	\begin{proof}
		Identifying the \(\theta_{\mathrm{LASD}}\) using the control variable method follows the same line of reasoning as in \citet{hm2007}. Under Assumption R3(a), \(\alpha \equiv F_{Y|D,V,X}(q_\alpha(d,v,x)|d,v,x)\) for \( 0< \alpha < 1\), where \(q_\alpha(d,v,x):=Q_{Y|D,V,X}(\alpha|d,v,x)\). Then, we have  
		\begin{align}
			0 
			&= \pbc{\big}{Y\leq q_\alpha(d+\delta,v,x)}{D=d+\delta,V=v,X=x} \notag\\
			&\quad - \pbc{\big}{Y\leq q_\alpha(d,v,x)}{D=d,V=v,X=x} \notag\\
			&= \pbc{\big}{m(d+\delta,x,\varepsilon)\leq q_\alpha(d+\delta,v,x)}{D=d+\delta,V=v,X=x} \notag\\
			&\quad-\pbc{\big}{m(d,x,\varepsilon)\leq q_\alpha(d,v,x)}{D=d,V=v,X=x}\notag\\
			&= A_{1,\delta}(d,v,x) + A_{2,\delta}(d,v,x) +A_{3,\delta}(d,v,x), \label{eq:hm.decomp.cv}
		\end{align}
		for \(\delta > 0\) close to zero, where
		\begin{align*}
			\begin{split}
				A_{1,\delta}(d,v,x) 
				&=  \pbc{\big}{m(d+\delta,x,\varepsilon)\leq q_\alpha(d+\delta,v,x)}{D=d+\delta,V=v,X=x}\\
				&\quad - \pbc{\big}{m(d+\delta,x,\varepsilon)\leq q_\alpha(d,v,x)}{D=d+\delta,V=v,X=x}\\
				A_{2,\delta}(d,v,x)
				&=\pbc{\big}{m(d+\delta,x,\varepsilon)\leq q_\alpha(d,v,x)}{D=d+\delta,V=v,X=x}\\
				&\quad - \pbc{\big}{m(d+\delta,x,\varepsilon)\leq q_\alpha(d,v,x)}{D=d,V=v,X=x}\\
				A_{3,\delta}(d,v,x)
				&=\pbc{\big}{m(d+\delta,x,\varepsilon)\leq q_\alpha(d,v,x)}{D=d,V=v,X=x}\\
				&\quad - \pbc{\big}{m(d,x,\varepsilon)\leq q_\alpha(d,v,x)}{D=d,V=v,X=x}.
			\end{split} 
		\end{align*}
		By Lemma C, there exists \(V=F_{D|Z,X}(D|Z,X)\) such that \(\varepsilon \indep D |X, V\), and thus \(A_{2,\delta}(d,v,x) = 0\). Next, under Assumption R3, following the similar procedure to the proof of \citet[Theorem 2.1]{hm2007}, we obtain
		\begin{align*}
			A_{1,\delta}(d,v,x) 
			&= \delta \partial_dq_\alpha(d,v,x)f_{Y|D,V,X}(q_\alpha(d,v,x)|d,v,x) + o(\delta)\\
			&= -\delta \partial_d F_{Y|D,V,X}(q_\alpha(d,v,x)|d,v,x) + o(\delta)\\
			A_{3,\delta}(d,v,x) 
			&=-\delta \int_{-\infty}^{\infty}y' f_{Y,\partial_dm^{d,x}|D,V,X}(q_\alpha(d,v,x),y'|d,v,x) \,dy' + o(\delta)\\
			&=- \delta\EEc{\big}{\partial_dm(d,x,\varepsilon)}{D=d,V=v,X=x,Y=q_\alpha(d,v,x)}\\
			&\qquad \cdot f_{Y|D,V,X}(q_\alpha(d,v,x)|d,v,x) + o(\delta)
		\end{align*} 
		as \(\delta \downarrow  0\). 
		Then, dividing both sides of Equation (\ref{eq:hm.decomp.cv}) by \(\delta\) and taking the limit \(\delta \downarrow 0\) implies that
		\begin{align*}
			\beta^{\mathrm{CV}}(d,v,x,q_\alpha(d,v,x)) =  \EEc{\big}{\partial_dm(d,x,\varepsilon)}{D=d,V=v,X=x,Y=q_\alpha(d,v,x)}
		\end{align*}
		where \(\beta^{\mathrm{CV}}(d,v,x,y):=  -\frac{\partial_d F_{Y|D,V,X}(y|d,v,x)}{ f_{Y|D,V,X}(y|d,v,x)}\). By setting \(\alpha=F_{Y|D,V,X}(y|d,v,x)\), \(\theta_{\mathrm{LASD}}\) is identified as
		\begin{align*}
			\theta_{\mathrm{LASD}}(d,x,y) = \EEc{\big}{\beta^{\mathrm{CV}}(d,V,x,y)}{D=d,X=x,Y=y}
		\end{align*} 
		for each \((d,x,y) \in \supp{D}\times\supp{X}\times\supp{Y|D,V,X}\).
		Therefore, using Theorem 1, we obtain that
		\begin{align*}
			\theta_\Gamma
			&= \Gamma'_{F_Y}\big(\EEc{\big}{\omega^f(\cdot,D)\,\partial_d m(D,X,\varepsilon)}{Y=\cdot}\big) \\
			&=\Gamma'_{F_Y} \big(\EEc{\big}{\omega^f(\cdot,D)\,\theta_{\mathrm{LASD}}(D,X,\cdot)}{Y=\cdot} \big)\\
			&=\Gamma'_{F_Y} \big(\EEc{\big}{\omega^f(\cdot,D)\,\EEc{\big}{\beta^{\mathrm{CV}}(D,V,X,\cdot)}{D,X,Y=\cdot}}{Y=\cdot} \big)\\
			&= \Gamma'_{F_Y} \big(\EEc{\big}{\omega^f(\cdot,D)\,\beta^{\mathrm{CV}}(D,V,X,\cdot)}{Y=\cdot} \big).
		\end{align*}
		This completes the proof.
	\end{proof}
	
	%/////////////////////////////////////
	\subsection{Proof of Corollary \ref{cory:theta.Q.tri}}
	\begin{proof}
		The Hadamard derivative of \(\Gamma=Q_\tau\) is given by Equation (\ref{eq:Q.hadamard.dev}). Then, applying Proposition 2 completes the proof.
	\end{proof}
	
	\subsection{Proof of Corollary \ref{cory:rothe2012.tri}}
	\begin{proof}
		\textit{(i)}\, In the proof of Corollary 2, we have shown that \(\dot{\pi}(D) = -(G_D(D)-F_D(D))/f_D(D)\) for  \(D^t=\pi_t(D)=H_t^{-1}(F_D(D))\). Applying Proposition 2, we have
		\begin{align*}
			\theta_\Gamma
			&= \Gamma'_{F_Y}\bigg(\int f_Y(\cdot)\,\frac{G_D(d)-F_D(d)}{f_D(d)}\,\bigg(-\frac{\partial_dF_{Y|D,V,X}(\cdot|d,v,x)}{f_{Y|D,V,X}(\cdot|d,v,x)}\bigg) \,dF_{D,V,X|Y}(d,v,x|\cdot)\bigg)\\
			&=\Gamma'_{F_Y}\bigg(\iiint f_{V,X|D}(v,x|d)\,(G_D(d)-F_D(d))\,\partial_dF_{Y|D,V,X}(\cdot|d,v,x)\,dd\,dx\,dv\bigg) \\
			&=\Gamma'_{F_Y}\bigg(\iiint F_{Y|D,V,X}(\cdot|d,v,x)\,d_d[f_{V,X|D}(v,x|d)\,(G_D(d)-F_D(d))]\,dx\,dv\bigg) \\
			&=\Gamma'_{F_Y}\bigg(\int F_{Y|D,V,X}(\cdot|d,v,x)\,d\big(F_{V,X|D}(v,x|d)\,(G_D(d)-F_D(d))\big)\bigg).
		\end{align*} 
		where the third equality follows from integration by parts. 
		Since \(H_t\) is another CDF of \(D^t\), by Sklar's theorem, there exists a copula \(C:[0,1]^{2+d_\mathsf{x}} \to [0,1]\) such that
		\[C(H_t(d),F_V(v),F_X(x)) = F_{D^t,V,X}(d,v,x)\]
		for each \((d,v,x)\in\supp{D^t} \times \supp{V} \times \supp{X}\). Differentiating both sides with respect to \(t\) at \(t=0\) implies that
		\begin{align*}
			&\partial_t C(H_t(d),F_V(v),F_X(x))|_{t=0} 
			= \partial_tF_{D^t,V,X}(d,v,x)|_{t=0}\\
			&\;= \partial_t\pbb{\big}{H_t^{-1}(F_D(D)) \leq d, V\leq v, X\leq x}|_{t=0} \tag{\(\pi_t(\cdot)=H_t^{-1}(F_D(\cdot))\)} \\
			&\;= \partial_t\pbb{\big}{D\leq Q_D(H_t(d)), V\leq v, X\leq x}|_{t=0} \tag{\(H_t\) is continuous \& \(f_D > 0\)} \\
			&\;= \partial_tF_{D,V,X}(Q_D(H_t(d)),v,x)|_{t=0}\\
			&\;= \partial_{d'}F_{D,V,X}(d',v,x)|_{d'=d} \cdot \partial_\tau Q_D(\tau)|_{\tau = F_{D}(d)}\cdot \partial_tH_t(d)|_{t=0}\\
			&=\partial_{d'}\biggl\{\int^{d'}\!\!\int^x\!\!\int^v f_{D,V,X}(a,v,x)\,dv'\,dx'\,da\biggr\}\Big|_{d'=d}\cdot \frac{G_D(d)-F_D(d)}{f_D(d)}\\
			&=\int^x\!\!\int^v f_{D,V,X}(d,v,x)\,dv'\,dx' \cdot \frac{G_D(d)-F_D(d)}{f_D(d)}\\
			&= F_{V,X|D}(v,x|d)\,\big(G_D(d) - F_D(d)\big). 
		\end{align*}
		This proves part (i) of the corollary.
		
		\textit{(ii)}\, 
		For the location shift, $\dot{\pi}(d)\equiv 1$. Applying Proposition 2, we have
		\begin{align*}
			\theta_\Gamma
			&=\Gamma'_{F_Y}\bigg(\int -f_Y(\cdot)\,\bigg(-\frac{\partial_d{F_{Y|D,V,X}(\cdot|d,v,x)}}{f_{Y|D,V,X}(\cdot|d,v,x)}\bigg)\,dF_{D,V,X|Y}(d,v,x|\cdot) \bigg) \\
			&=\Gamma'_{F_Y}\Big(\EE{\big}{\partial_d{F_{Y|D,V,X}(\cdot|D,V,X)}}\Big) ,
		\end{align*}
		which is consistent with Theorem 1 of \citet{rothe2010el}. This completes the proof.
	\end{proof}
	
	%************************************************
	\section{Additional Results} \label{sec:discussion}
	%************************************************
	\renewcommand{\theequation}{B.\arabic{equation}}
	\subsection{Average Policy Effects}
	For $\Gamma=\mu: F \mapsto \int y\,dF(y)$, the average MPE is defined as 
	\[\theta_\mu:=\lim_{t \downarrow 0}\frac{\mu(F_{Y^t})-\mu(F_Y)}{t} = \lim_{t\downarrow 0}\frac{\E[m(\pi_t(D),X,\varepsilon)]-\E[Y]}{t}.\]

	\subsubsection{Structural Interpretation}
	\begin{corollaryS}[Structural characterization of \(\theta_\mu\)]~\\ \label{cory:struct.theta.mu}
		For \(\Gamma=\mu\), under the assumptions of Theorem 1 and assuming the support \(\supp{Y}\) is compact, we obtain
		\begin{align*}
			\theta_\mu = \EE{\big}{\dot{\pi}(D)\,\partial_dm(D,X,\varepsilon)}.
		\end{align*}
	\end{corollaryS}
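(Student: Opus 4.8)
The plan is to obtain $\theta_\mu$ as a direct application of Theorem~\ref{thm:struct.theta}, so the only genuine work is to verify that the mean functional $\mu$ is Hadamard-differentiable at $F_Y$ and to identify its derivative. The compactness of $\supp{Y}$ is precisely what makes this clean. Writing $\supp{Y}=[a,b]$ and integrating by parts, every CDF $F$ supported on $[a,b]$ satisfies $\mu(F)=\int y\,dF(y)=b-\int_a^b F(y)\,dy$. This exhibits $\mu$ as an affine functional whose nonconstant part, $F\mapsto -\int_a^b F(y)\,dy$, is a bounded linear map under the sup-norm (since $\int_a^b|F(y)|\,dy\le(b-a)\|F\|_\infty$, and $\big|\int_a^b(h_t-h)\big|\le (b-a)\|h_t-h\|_\infty\to 0$ whenever $h_t\to h$ uniformly). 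A continuous linear functional is its own Hadamard derivative, so $\mu$ is Hadamard-differentiable at $F_Y$ with
\[
\mu'_{F_Y}(h)=-\int_{\supp{Y}}h(y)\,dy.
\]

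With the derivative in hand, I would invoke Theorem~\ref{thm:struct.theta}, which yields $\theta_\mu=\mu'_{F_Y}(\theta_{id})=-\int_{\supp{Y}}\theta_{id}(y)\,dy$, where $\theta_{id}(y)=\EEc{\big}{\omega^f(y,D)\,\partial_dm(D,X,\varepsilon)}{Y=y}$ is the distributional MPE identified in the first part of the theorem. Substituting $\omega^f(y,d)=-f_Y(y)\dot{\pi}(d)$ gives $\theta_{id}(y)=-f_Y(y)\,\EEc{\big}{\dot{\pi}(D)\,\partial_dm(D,X,\varepsilon)}{Y=y}$, so that
\[
\theta_\mu=\int_{\supp{Y}}f_Y(y)\,\EEc{\big}{\dot{\pi}(D)\,\partial_dm(D,X,\varepsilon)}{Y=y}\,dy.
\]

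The final step is to collapse this integral by the law of iterated expectations: integrating the conditional expectation against the marginal density $f_Y$ recovers the unconditional expectation, giving $\theta_\mu=\EE{\big}{\dot{\pi}(D)\,\partial_dm(D,X,\varepsilon)}$, which is the claim. I expect the main obstacle to be the Hadamard-differentiability step rather than the computation: the mean functional is \emph{not} Hadamard-differentiable on the full CDF space under the sup-norm, because of tail contributions to $\int y\,dF$, which is exactly why the corollary assumes $\supp{Y}$ compact, and the integration-by-parts representation is the device that sidesteps this. The only other point needing care is integrability for the interchange in the last step and for the law of iterated expectations, but this is already secured by the dominating function in Assumption~\ref{A:regularity}(c), which guarantees $\EE{\big}{|\dot{\pi}(D)\,\partial_dm(D,X,\varepsilon)|}<\infty$.
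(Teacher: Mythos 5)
Your proof is correct, and it reaches the result by a slightly different route than the paper at the one step where there is genuine work: establishing Hadamard differentiability of $\mu$ and computing $\mu'_{F_Y}$. You integrate by parts on the compact support to write $\mu(F)=b-\int_a^b F(y)\,dy$, so that $\mu$ is affine and continuous in the sup-norm and hence its own derivative, $\mu'_{F_Y}(h)=-\int_{\supp{Y}}h(y)\,dy$; you then collapse $\int f_Y(y)\,\EEc{\big}{\dot{\pi}(D)\partial_d m}{Y=y}\,dy$ by iterated expectations. The paper instead uses the quantile representation $\mu(F)=\int_0^1 Q_\tau(F)\,d\tau$ and integrates the quantile Hadamard derivative from Equation (\ref{eq:Q.hadamard.dev}), obtaining $\mu'_{F_Y}(h)=-\int_0^1 h(q_\tau)/f_Y(q_\tau)\,d\tau$, then changes variables $\tau=F_Y(y)$; the two derivative formulas coincide precisely under that substitution. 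Your argument is more elementary and self-contained — it does not lean on the differentiability of $Q_\tau$ (and hence not on $f_Y>0$ at that step) — while the paper's version slots the mean into the same template it reuses for the quantile and Gini functionals. Your closing remarks on why compactness is needed (tail contributions defeat sup-norm differentiability of the mean on the full CDF space) and on where integrability comes from (the dominating function in Assumption \ref{A:regularity}(c)) are accurate and are points the paper leaves implicit.
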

	\begin{proof}[\normalfont\bfseries Proof of Corollary \ref{cory:struct.theta.mu}]
		Under the assumptions given in the Corollary, \(\mu(F_Y) = \int y\,dF_Y(y) = \int_{0}^{1}\,Q_\tau(F_Y)\,d\tau\), and
		\(\mu\) is Hadamard differentiable at \(F_Y\) with derivative 
		\begin{align}
			\mu'_{F_Y}(h) = \int_{0}^{1} -\frac{h(q_\tau)}{f_Y(q_\tau)}\,d\tau,\quad q_\tau := Q_\tau(F). \label{eq:mu.hadamard.dev}
		\end{align}
		Then, applying Theorem 1 in the main text, we have
		\begin{align*}
			\theta_\mu 
			&= \int_{0}^{1} \EEc{\big}{\dot{\pi}(D)\partial_dm(D,X,\varepsilon)}{Y=q_\tau}\,d\tau\\
			&\eq{1}{=} \int \EEc{\big}{\dot{\pi}(D)\partial_dm(D,X,\varepsilon)}{Y=y}\,dF_Y(y)\\
			&=\EE{\big}{\dot{\pi}(D)\partial_dm(D,X,\varepsilon)},
		\end{align*}
		where equality (1) follows from the change of variable \(\tau = F_Y(y)\). This completes the proof.
	\end{proof}
	
	\subsubsection{Identification}
	\begin{corollaryS}[Identification of \(\theta_\mu\) under conditional independence]~\\ \label{cory:id.exo.theta.mu}
		For \(\Gamma=\mu\), under the assumptions of Proposition 1 and assuming the support \(\supp{Y}\) is compact, we obtain
		\begin{align}
			\theta_\mu 
			&= \EE{\big}{\dot{\pi}(D)\beta(D,X,Y)} \label{eq:id.exo.theta.mu}\\
			&= \EE{\big}{\dot{\pi}(D)\,\partial_d\E[Y|D,X]} \label{eq:id.exo.theta.mu.wooldridge}
		\end{align}
		where \(\beta(d,x,y):=-\frac{\partial_d{F_{Y|D,X}(y|d,x)}}{f_{Y|D,X}(y|d,x)}\). 
	\end{corollaryS}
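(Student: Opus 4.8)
The plan is to establish the two equalities in turn, both leveraging the identification of the local average structural derivative under $\varepsilon \indep D \mid X$ supplied by Proposition~\ref{pron:id.theta.exo}.

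\emph{First equality.} I would specialize Proposition~\ref{pron:id.theta.exo} to $\Gamma=\mu$, exactly as the proof of Corollary~\ref{cory:struct.theta.mu} specializes Theorem~\ref{thm:struct.theta}. Recall that $\mu$ is Hadamard-differentiable at $F_Y$ with derivative $\mu'_{F_Y}(h)=\int_0^1 -h(q_\tau)/f_Y(q_\tau)\,d\tau$, given in~(\ref{eq:mu.hadamard.dev}). Proposition~\ref{pron:id.theta.exo} yields $\theta_\mu=\mu'_{F_Y}\big(h\big)$ with $h(y)=\E[\omega^f(y,D)\,\beta(D,X,y)\mid Y=y]$. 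Substituting $\omega^f(y,d)=-f_Y(y)\dot{\pi}(d)$ and pulling the deterministic factor $f_Y(y)$ out of the conditional expectation, the $-f_Y(q_\tau)$ carried by the weight cancels the $1/f_Y(q_\tau)$ in the derivative, leaving $\theta_\mu=\int_0^1 \E[\dot{\pi}(D)\,\beta(D,X,q_\tau)\mid Y=q_\tau]\,d\tau$. A change of variable $\tau=F_Y(y)$, so that $d\tau=f_Y(y)\,dy$, then gives $\theta_\mu=\int \E[\dot{\pi}(D)\,\beta(D,X,y)\mid Y=y]\,f_Y(y)\,dy$, which by the law of iterated expectations equals $\E[\dot{\pi}(D)\,\beta(D,X,Y)]$. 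This reproduces the change-of-variable step already used in the proof of Corollary~\ref{cory:struct.theta.mu}.

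\emph{Second equality.} Here the plan is to show the pointwise identity $\E[\beta(D,X,Y)\mid D=d,X=x]=\partial_d\E[Y\mid D=d,X=x]$; multiplying by $\dot{\pi}(d)$ and taking expectations over $(D,X)$ then delivers the claim via iterated expectations. For the left side, integrate $\beta(d,x,y)=-\partial_d F_{Y|D,X}(y|d,x)/f_{Y|D,X}(y|d,x)$ against the conditional density $f_{Y|D,X}(\cdot|d,x)$; since this density is strictly positive by Assumption~\ref{A:regular.hm}(a) it cancels, giving $\E[\beta(D,X,Y)\mid D=d,X=x]=-\int \partial_d F_{Y|D,X}(y|d,x)\,dy$. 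For the right side, write $\E[Y\mid D=d,X=x]=\int y\,dF_{Y|D,X}(y|d,x)$ and integrate by parts over the compact support $\supp{Y}=[\underline y,\bar y]$ to obtain $\E[Y\mid D=d,X=x]=\bar y-\int_{\underline y}^{\bar y} F_{Y|D,X}(y|d,x)\,dy$; differentiating in $d$ and passing $\partial_d$ inside the integral produces $\partial_d\E[Y\mid D=d,X=x]=-\int \partial_d F_{Y|D,X}(y|d,x)\,dy$, matching the previous display. This is the Wooldridge-type equivalence between an average partial effect and the average slope of the conditional mean.

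\emph{Main obstacle.} The delicate steps are the two interchanges of limiting operations. The critical one is passing $\partial_d$ through the $y$-integral in the second equality; I would justify it by a dominated-convergence (Leibniz) argument, using compactness of $\supp{Y}$ together with the continuous differentiability of $F_{Y|D,X}(y|\cdot,x)$ from Assumption~\ref{A:regular.hm}(a) to supply a uniform integrable envelope for the difference quotients. Compactness is also what makes the boundary term $\bar y$ constant in $(d,x)$, so it vanishes under $\partial_d$; without a fixed support this term would contribute. By contrast, the cancellation of $f_Y$ and the change of variable in the first equality are routine given $0<f_Y<\infty$ on $\supp{Y}$ (Assumption~\ref{A:regularity}(a)).
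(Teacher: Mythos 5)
Your proposal is correct, and for equality (\ref{eq:id.exo.theta.mu}) it coincides exactly with the paper's proof (specialize Proposition \ref{pron:id.theta.exo} to \(\Gamma=\mu\), cancel \(f_Y\), change variables \(\tau=F_Y(y)\), apply iterated expectations). For equality (\ref{eq:id.exo.theta.mu.wooldridge}) the paper gives two derivations, and yours matches the second one: deduce it from (\ref{eq:id.exo.theta.mu}) via the pointwise identity \(\E[\beta(D,X,Y)\mid D=d,X=x]=-\int\partial_dF_{Y|D,X}(y|d,x)\,dy=\partial_d\E[Y\mid D=d,X=x]\), obtained by integration by parts over the compact support. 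The paper's \emph{primary} derivation is different in flavor: it starts from the structural representation \(\theta_\mu=\E[\dot{\pi}(D)\,\partial_dm(D,X,\varepsilon)]\) of Corollary \ref{cory:struct.theta.mu}, passes the limit of the difference quotients of \(m\) through the conditional expectation by dominated convergence (under an added envelope condition), and then invokes \(\varepsilon\indep D\mid X\) to replace \(\E[m(d+\delta,x,\varepsilon)\mid D=d,X=x]\) by \(\E[Y\mid D=d+\delta,X=x]\). The structural route makes the role of conditional independence explicit and needs no boundary bookkeeping, whereas your distributional route stays entirely at the level of observables and only requires the Leibniz interchange you flag; your justification of that interchange (compact support plus the continuous differentiability of \(F_{Y|D,X}(y|\cdot,x)\) from Assumption \ref{A:regular.hm}(a)) is at the same level of rigor as the paper's own treatment, which notes the need for the upper boundary of the conditional support not to vary with \(d\) so that the boundary term is annihilated by \(\partial_d\).
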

	
	\begin{remark}
		Equality (\ref{eq:id.exo.theta.mu.wooldridge}) is the main result of \citet{wooldridge2004}, where the right-hand side's estimand is the $\dot{\pi}$-weighted \textit{conditional average partial derivative}.
	\end{remark}
	
	\begin{proof}[\normalfont\bfseries Proof of Corollary \ref{cory:id.exo.theta.mu}]
		Equality (\ref{eq:id.exo.theta.mu}) follows from Proposition 1. For \(\Gamma=\mu\), we have
		\begin{align*}
			\theta_\mu
			&=\int_{0}^{1} - \frac{\EEc{\big}{\omega^f(q_\tau,D)\beta(D,X,q_\tau)}{Y=q_\tau}}{f_Y(q_\tau)}\,d\tau\\
			&=\int_{0}^{1} \EEc{\big}{\dot{\pi}(D)\beta(D,X,q_\tau)}{Y=q_\tau} \,d\tau\\
			&\eq{1}{=}\int_{\supp{Y}}\EEc{\big}{\dot{\pi}(D)\beta(D,X,y)}{Y=y} \,dF_Y(y)\\
			&= \EE{\big}{\dot{\pi}(D)\beta(D,X,Y)},
		\end{align*}
		where equality (1) follows from the change of variable \(\tau = F_Y(y)\). 
		
		Equality (\ref{eq:id.exo.theta.mu.wooldridge}) follows from Corollary \ref{cory:struct.theta.mu}. Note that
		\begin{align*}
			\theta_\mu
			&=\E[\dot{\pi}(D)\,\partial_d m(D,X,\varepsilon)]\\
			&\eq{2}{=}\int \dot{\pi}(d)\EEc{\bigg}{\lim_{\delta \to 0}\frac{m(d+\delta,x,\varepsilon)-m(d,x,\varepsilon)}{\delta}}{D=d,X=x}\,dF_{D,X}(d,x)\\
			&=\int \dot{\pi}(d)\,\lim_{\delta \to 0}\frac{\E[m(d+\delta,x,\varepsilon)|D=d,X=x]-\E[Y|D=d,X=x]}{\delta}\,dF_{D,X}(d,x) \\
			&\eq{3}{=} \int \dot{\pi}(d)\,\lim_{\delta \to 0}\frac{\E[Y|D=d+\delta,X=x]-\E[Y|D=d,X=x]}{\delta}\,dF_{D,X}(d,x) \\
			&=\int \dot{\pi}(d)\,\partial_d\E[Y|D=d,X=x]\,dF_{D,X}(d,x)\\
			&=\EE{\big}{\dot{\pi}(D)\partial_d\E[Y|D,X]}, 
		\end{align*}
		where equality (2) follows from the DCT under the regularity assumption: There exist a constant \(c > 0\) such that \(\sup_{|\delta|\leq c} \int  \big|\frac{m(d+\delta,x,e)-m(d,x,e)}{\delta}\big|\,dF_{\varepsilon|D,X}(e|d,x) < \infty\) for every \((d,x)\in\supp{D}\times\supp{X}\). Equality (3) follows from \(\varepsilon \indep D |X\). 
		
		Equality (\ref{eq:id.exo.theta.mu.wooldridge}) can also be derived from equality (\ref{eq:id.exo.theta.mu}). 
		Suppose further that the support $\supp{Y|D,X}$ is bounded and its upper bound  $\overline{y}$ does not depend on $d$. Integration by parts yields that 
		\begin{align*}
			\partial_d \int F_{Y|D,X}(y|d,x)dy
			&=\partial_d \bigg\{ y\,F_{Y|D,X}(y|d,x)\Big|_{y=\underline{y}(d,x)}^{y=\overline{y}(x)} - \int y\,dF_{Y|D,X}(y|d,x)\bigg\} \\ 
			&= \partial_d \Big\{ \overline{y}(x)\cdot 1 - \underline{y}(d,x)\cdot 0 - \E[Y|D=d,X=x]\Big\}\\
			&=-\partial_d\E[Y|D=d,X=x].
		\end{align*}
		Then, applying Proposition 1, we obtain that 
		\begin{align*}
			\theta_\mu
			&=-\int \dot{\pi}(d)\left(\int \partial_d F_{Y|D,X}(y|d,x)\,dy \right) \,dF_{D,X}(d,x) \\
			&=-\int \dot{\pi}(d) \left(\partial_d \int  F_{Y|D,X}(y|d,x)\,dy \right)\,dF_{D,X}(d,x) \\
			&=-\int \dot{\pi}(d) \left(-\partial_d\E[Y|D=d,X=x]\right)\,dF_{D,X}(d,x) \\
			&=\EE{\big}{\dot{\pi}(D)\,\partial_d\E[Y|D,X]}. 
		\end{align*}
		This completes the proof.
	\end{proof}

	\begin{corollaryS}[Identification of \(\theta_\mu\) using the control function]~\\ \label{cory:id.cv.theta.mu}
		For \(\Gamma=\mu\), under the assumptions of Proposition 2 and assuming the support \(\supp{Y}\) is compact, we obtain
		\begin{align*}
			\theta_\mu 
			&= \EE{\big}{\dot{\pi}(D)\,\beta^{\mathrm{CV}}(D,V,X,Y)}\\
			&= \EE{\big}{\dot{\pi}(D)\,\partial_d\E[Y|D,V,X]}
		\end{align*}
		where $\beta^{\mathrm{CV}}(d,v,x,y):=-\frac{\partial_d{F_{Y|D,V,X}(y|d,v,x)}}{f_{Y|D,V,X}(y|d,v,x)}$. 
	\end{corollaryS}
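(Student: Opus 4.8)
The plan is to mirror the two-part argument used in the proof of Corollary \ref{cory:id.exo.theta.mu}, replacing Proposition \ref{pron:id.theta.exo} by its triangular-system counterpart, Proposition \ref{pron:id.theta.tri}, and replacing the conditional independence $\varepsilon \indep D\,|\,X$ by the control-variable version $\varepsilon \indep D\,|\,V,X$ supplied by Lemma \ref{lem:CF}. For the first equality I would start from the Hadamard derivative of the mean functional computed in Equation (\ref{eq:mu.hadamard.dev}), namely $\mu'_{F_Y}(h)=\int_0^1 -h(q_\tau)/f_Y(q_\tau)\,d\tau$, and feed it into Proposition \ref{pron:id.theta.tri}. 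Since $\omega^f(q_\tau,d)=-f_Y(q_\tau)\dot{\pi}(d)$, the factor $f_Y(q_\tau)$ appearing in the Hadamard derivative cancels the one hidden in $\omega^f$, leaving $\theta_\mu=\int_0^1 \EEc{\big}{\dot{\pi}(D)\beta^{\mathrm{CV}}(D,V,X,q_\tau)}{Y=q_\tau}\,d\tau$. A change of variable $\tau=F_Y(y)$ then turns the integral over $(0,1)$ into an integral against $dF_Y$, and collapsing the conditional expectation by the law of iterated expectations gives the first claimed equality $\theta_\mu=\EE{\big}{\dot{\pi}(D)\beta^{\mathrm{CV}}(D,V,X,Y)}$. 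This step is essentially identical to the first half of the proof of Corollary \ref{cory:id.exo.theta.mu}.

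For the second equality I would follow the structural route, starting from $\theta_\mu=\EE{\big}{\dot{\pi}(D)\,\partial_d m(D,X,\varepsilon)}$ established in Corollary \ref{cory:struct.theta.mu}. Conditioning on $(D,V,X)$ and applying the dominated convergence theorem (under the same integrability condition invoked in Corollary \ref{cory:id.exo.theta.mu}, now conditioning on $(D,V,X)$) yields $\EEc{\big}{\partial_d m(d,x,\varepsilon)}{D=d,V=v,X=x}=\lim_{\delta\to 0}\delta^{-1}\big(\EEc{\big}{m(d+\delta,x,\varepsilon)}{D=d,V=v,X=x}-\EEc{\big}{m(d,x,\varepsilon)}{D=d,V=v,X=x}\big)$. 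The crucial move is to rewrite the shifted structural function as a conditional mean of $Y$: under $\varepsilon\indep D\,|\,V,X$ the conditional law of $\varepsilon$ given $(D,V,X)$ does not depend on $D$, so $\EEc{\big}{m(d+\delta,x,\varepsilon)}{D=d,V=v,X=x}=\EEc{\big}{m(d+\delta,x,\varepsilon)}{D=d+\delta,V=v,X=x}=\EEc{\big}{Y}{D=d+\delta,V=v,X=x}$. Substituting back identifies the limit with $\partial_d\EEc{\big}{Y}{D=d,V=v,X=x}$, and integrating against $dF_{D,V,X}$ delivers the second equality.

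Alternatively, the second equality follows directly from the first by integration by parts, exactly as in the final paragraph of the proof of Corollary \ref{cory:id.exo.theta.mu}: writing $\EE{\big}{\dot{\pi}(D)\beta^{\mathrm{CV}}(D,V,X,Y)}$ as $-\int \dot{\pi}(d)\big(\partial_d\int F_{Y|D,V,X}(y|d,v,x)\,dy\big)\,dF_{D,V,X}(d,v,x)$ and using $\partial_d\int F_{Y|D,V,X}(y|d,v,x)\,dy=-\partial_d\EEc{\big}{Y}{D=d,V=v,X=x}$. I expect the main obstacle to be the bookkeeping around the boundary term in this integration by parts, which requires $\supp{Y|D,V,X}$ to be bounded with an upper endpoint that does not vary with $d$ (the analogue of the support condition imposed in Corollary \ref{cory:id.exo.theta.mu}); compactness of $\supp{Y}$ is assumed precisely to license this. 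The only genuinely new ingredient relative to Corollary \ref{cory:id.exo.theta.mu} is the replacement of $\varepsilon\indep D\,|\,X$ by $\varepsilon\indep D\,|\,V,X$ when passing from the structural derivative to the conditional mean of $Y$, so verifying that Lemma \ref{lem:CF} indeed supplies this conditional independence is the step warranting the most care.
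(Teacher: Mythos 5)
Your proposal is correct and matches the paper's intended argument: the paper's own proof of this corollary simply states that it ``follows arguments analogous to Corollary S2 and is therefore omitted,'' and your write-up supplies exactly that analogy --- Proposition 2 in place of Proposition 1, the control-variable independence $\varepsilon \indep D\,|\,V,X$ from Lemma C in place of $\varepsilon \indep D\,|\,X$, and the same two routes (the structural/DCT route and the integration-by-parts route) for the second equality. You have also correctly identified the only substantive point needing verification, namely that Lemma C delivers the required conditional independence.
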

	\begin{proof}[\normalfont\bfseries Proof of Corollary \ref{cory:id.cv.theta.mu}]
		The proof follows arguments analogous to Corollary \ref{cory:id.exo.theta.mu} and is therefore omitted.
	\end{proof}
	
	%************************************
	%	Gini Effect
	%************************************
	\subsection{Gini Policy Effects}
	The \textit{Gini coefficient (GC)} is given by
	\[GC:F \mapsto 1 - 2\int_{0}^{1}L_p(F)\,dp, \quad L_p(F):=\frac{\int_{0}^{p}Q_\tau(F)\,d\tau}{\int y\,dF(y)}\] 
	where \(L_p\) denotes the \textit{Lorenz curve}. Then, the Gini MPE is defined as 
	\begin{align*}
		\theta_{GC}:=\lim_{t \downarrow 0}\frac{GC(F_{Y^t}) - GC(F_{Y})}{t}.
	\end{align*}
	\subsubsection{Structural Interpretation}
	\begin{corollaryS}[Structural characterization of \(\theta_{GC}\)]~\\ \label{cory:struct.theta.gini}
		For \(\Gamma=GC\), under the assumptions of Theorem 1 and assuming the support \(\supp{Y}\) is compact, we obtain
		\begin{align*}
			\theta_{GC} = \EE{\Big}{\omega^{GC}(Y,D)\cdot\partial_dm(D,X,\varepsilon)},
		\end{align*}
		where \(\omega^{GC}(y,d):=\frac{2\varphi(F_Y(y))\dot{\pi}(D)}{\E[Y]^2}\) and \(\varphi(\tau) = \int_{0}^{1}(\tau-p)Q_p(F_Y)\,dp\).
	\end{corollaryS}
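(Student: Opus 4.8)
The plan is to reuse the two-stage template from the proofs of Corollaries~\ref{cory:struct.theta.Q} and~\ref{cory:struct.theta.mu}: establish that $GC$ is Hadamard-differentiable at $F_Y$ and compute $GC'_{F_Y}$ in closed form, then invoke Theorem~\ref{thm:struct.theta} to write $\theta_{GC}=GC'_{F_Y}(\theta_{id})$ and simplify. The reduction step is to express the Gini coefficient through the quantile functional. Interchanging the order of integration in the Lorenz term gives $\int_0^1 L_p(F)\,dp = \frac{1}{\mu(F)}\int_0^1(1-\tau)Q_\tau(F)\,d\tau$, where $\mu(F)=\int_0^1 Q_\tau(F)\,d\tau=\int y\,dF(y)$, so that $GC(F)=1-2\,a(F)/b(F)$ with $a(F):=\int_0^1(1-\tau)Q_\tau(F)\,d\tau$ and $b(F):=\mu(F)$. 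Both $a$ and $b$ are integrals over the compact interval $[0,1]$ of the quantile map $F\mapsto Q_\tau(F)$, whose Hadamard derivative is $[Q_\tau]'_{F_Y}(h)=-h(q_\tau)/f_Y(q_\tau)$ by Equation~(\ref{eq:Q.hadamard.dev}).

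Next I would assemble $GC'_{F_Y}$ using the chain rule together with the quotient rule for Hadamard derivatives, which applies because $(x,y)\mapsto x/y$ is smooth away from $b(F_Y)=\E[Y]\neq 0$. With $a'_{F_Y}(h)=-\int_0^1(1-\tau)\frac{h(q_\tau)}{f_Y(q_\tau)}\,d\tau$ and $b'_{F_Y}(h)=-\int_0^1\frac{h(q_\tau)}{f_Y(q_\tau)}\,d\tau$, the quotient rule gives $GC'_{F_Y}(h)=-\frac{2}{\E[Y]^2}\big[\E[Y]\,a'_{F_Y}(h)-a(F_Y)\,b'_{F_Y}(h)\big]$. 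Collecting the coefficient of $h(q_\tau)/f_Y(q_\tau)$ under a single integral produces the bracket $\E[Y](1-\tau)-a(F_Y)$; since $a(F_Y)=\E[Y]-\int_0^1 p\,q_p\,dp$, this bracket equals $-\varphi(\tau)$ with $\varphi(\tau)=\tau\E[Y]-\int_0^1 p\,q_p\,dp=\int_0^1(\tau-p)Q_p(F_Y)\,dp$. Hence $GC'_{F_Y}(h)=-\frac{2}{\E[Y]^2}\int_0^1\varphi(\tau)\,\frac{h(q_\tau)}{f_Y(q_\tau)}\,d\tau$, which reproduces exactly the weight appearing in the statement.

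Finally I would apply Theorem~\ref{thm:struct.theta}, namely $\theta_{GC}=GC'_{F_Y}(\theta_{id})$ with $\theta_{id}(y)=\EEc{\big}{\omega^f(y,D)\,\partial_d m(D,X,\varepsilon)}{Y=y}$ and $\omega^f(y,d)=-f_Y(y)\dot{\pi}(d)$. Substituting $h=\theta_{id}$, the factor $f_Y(q_\tau)$ in the denominator cancels the $-f_Y(q_\tau)$ inside $\theta_{id}(q_\tau)$, leaving $\theta_{GC}=\frac{2}{\E[Y]^2}\int_0^1\varphi(\tau)\,\EEc{\big}{\dot{\pi}(D)\,\partial_d m(D,X,\varepsilon)}{Y=q_\tau}\,d\tau$. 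The change of variable $\tau=F_Y(y)$ (so $q_\tau=y$ and $d\tau=dF_Y(y)$) together with the law of iterated expectations then yields $\theta_{GC}=\EE{\big}{\omega^{GC}(Y,D)\,\partial_d m(D,X,\varepsilon)}$ with $\omega^{GC}(y,d)=2\varphi(F_Y(y))\dot{\pi}(d)/\E[Y]^2$, as claimed.

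The main obstacle is the first-stage differentiability, not the subsequent algebra: I must justify that $GC$ is Hadamard-differentiable at $F_Y$ and, in particular, that the integral $\int_0^1(\cdot)\,d\tau$ of the quantile maps is itself Hadamard-differentiable with derivative equal to the integral of the pointwise derivatives, i.e. that the operator $\int_0^1 d\tau$ and the Hadamard limit may be interchanged. This requires uniform control of the difference quotients of $Q_\tau$ across $\tau\in[0,1]$, for which the compactness of $\supp{Y}$ and the regularity of $f_Y$ in Assumption~\ref{A:regularity}(a) are exactly what is needed; the ratio structure is harmless since $\E[Y]\neq 0$. Once differentiability and the explicit form of $GC'_{F_Y}$ are secured, the remaining steps are routine.
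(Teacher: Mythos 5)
Your proposal is correct and follows essentially the same route as the paper: compute $GC'_{F_Y}$ in closed form via the chain/quotient rule applied to the quantile representation of the Lorenz integral, then apply Theorem~\ref{thm:struct.theta} with $h=\theta_{id}$, cancel $f_Y(q_\tau)$, and change variables $\tau=F_Y(y)$. The only difference is cosmetic: you interchange the order of integration before differentiating (writing $GC=1-2a/b$ with a single quotient), whereas the paper differentiates $L_p$ first and interchanges afterward; both land on the same weight $\varphi(\tau)=\int_0^1(\tau-p)Q_p(F_Y)\,dp$.
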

	\begin{proof}[\normalfont\bfseries Proof of Corollary \ref{cory:struct.theta.gini}]
		We first derive the Hadamard derivative of \(GC\). Under the assumptions given in the corollary\footnote{\citet[Proposition 2]{bhattacharya2007} provides alternative sufficient conditions for the Hadamard differentiability of the Lorenz curve $L_p$ that do not require $Q_\tau(F)$ to be Hadamard differentiable.}, applying Theorem 20.9 (Chain rule) of \citet{vaart2000}, we have
		\begin{align*}
			[L_p]'_{F_Y}(h)
			&=[L_p]'_{(\int_{0}^{p}Q_\tau(F_Y)\,d\tau,\, \mu(F_Y))}\circ \left(\Big[\int_{0}^{p}Q_\tau(F_Y)\,d\tau \Big]'_{F_Y}, \mu'_{F_Y}\right)(h)\\
			&=\frac{\Big[\int_{0}^{p}Q_\tau(F_Y)\,d\tau \Big]'_{F_Y}(h)\cdot\mu(F_Y) - \mu'_{F_Y}(h)\cdot \int_{0}^{p}Q_\tau(F_Y)\,d\tau}{\mu^2(F_Y)}\\
			&=\frac{\int_{0}^{p}[Q_\tau]'_{F_Y}(h) \,d\tau}{\mu(F_Y)} - \frac{L_p(F_Y)}{\mu(F_Y)}
			\int_{0}^{1} [Q_\tau]'_{F_Y}(h) \,d\tau.
		\end{align*}
		where \([Q_\tau]'_{F_Y}(h)= -h(q_\tau)/f_Y(q_\tau)\). 
		Then, \(GC\) is Hadamard differentiable at \(F_Y\) tangentially to \(C(\supp{Y})\), with derivative given by 
		\begin{align}
			GC'_{F_Y}(h)
			&= -2 \int_{0}^{1} [L_p]'_{F_Y}(h)\,dp \notag\\
			&= \frac{2}{\mu(F_Y)}\left\{C_L\int_{0}^{1} [Q_\tau]'_{F_Y}(h) \,d\tau 
			- \int_{0}^{1}\int_{0}^{p}[Q_\tau]'_{F_Y}(h)\,d\tau\,dp \right\} \notag\\
			&\eq{1}{=} \frac{2}{\mu(F_Y)}\left\{C_L \int_{0}^{1} [Q_\tau]'_{F_Y}(h) \,d\tau 
			- \int_{0}^{1}\int_{\tau}^{1}\,dp\,[Q_\tau]'_{F_Y}(h)\,d\tau\right\} \notag\\
			&=\frac{2}{\mu(F_Y)}\int_{0}^{1}\big(C_L - 1 + \tau \big) \left(-\frac{h(q_\tau)}{f_Y(q_\tau)}\right) \,d\tau\notag\\
			&\eq{2}{=} \frac{2}{\mu^2(F_Y)}\int_{0}^{1} \varphi(\tau) \left(-\frac{h(q_\tau)}{f_Y(q_\tau)}\right)\,d\tau\notag
		\end{align} 
		where \(C_L := \int_{0}^{1}L_p(F_Y)\,dp\) and \(\varphi(\tau):=\int_{0}^{1}(\tau - p)Q_p(F_Y) \,dp\). Equality (1) follows from interchanging the order of integration; equality (2) holds because
		\begin{align*}
			C_L - 1 + \tau
			&=\frac{\int_{0}^{1}\int_{0}^{p}Q_\tau(F_Y)\,d\tau\,dp}{\mu(F_Y)} - 1 + \tau\\
			&=\frac{\int_{0}^{1}(1-\tau)Q_\tau(F_Y)\,d\tau}{\mu(F_Y)} - 1 + \tau\\
			&= \frac{\int_{0}^{1}(\tau - p)Q_p(F_Y) \,dp}{\mu(F_Y)}.
		\end{align*} 
		
		Finally, applying Theorem 1, we have
		\begin{align*}
			\theta_{GC} 
			&= GC'_{F_Y}\left(\EEc{\big}{\omega^f(\cdot,D)\cdot\partial_dm(D,X,\varepsilon)}{Y=\cdot}\right)\\
			&= \int_{0}^{1} \EEc{\bigg}{ \frac{2\varphi(\tau)\dot{\pi}(D)}{\E[Y]^2}\cdot\partial_dm(D,X,\varepsilon)}{Y=q_\tau} \,d\tau\\
			&\eq{3}{=} \int_{\supp{Y}} \EEc{\bigg}{ \frac{2\varphi(F_Y(y))\dot{\pi}(D)}{\E[Y]^2}\cdot\partial_dm(D,X,\varepsilon)}{Y=y} \,dF_Y(y)\\
			&=\EE{\bigg}{\frac{2\varphi(F_Y(Y))\dot{\pi}(D)}{\E[Y]^2}\cdot\partial_dm(D,X,\varepsilon)}
		\end{align*}
		where \(\E[Y]=\mu(F_Y)\) and equality (3) follows from the change of variable \(\tau = F_Y(y)\). This completes the proof.
	\end{proof}

	\subsubsection{Identification}
	\begin{corollaryS}[Identification of \(\theta_{GC}\) under conditional independence]~\\ \label{cory:id.exo.theta.gini}
		For \(\Gamma=GC\), under the assumptions of Proposition 1 and assuming the support \(\supp{Y}\) is compact, we obtain
		\begin{align*}
			\theta_{GC} 
			&= \EE{\Big}{\omega^{GC}(Y,D)\cdot \beta(D,X,Y)}\\
			&= \frac{2}{\E[Y]^2}\int_{0}^{1} \varphi(\tau)\cdot\beta^{\mathrm{UQR}}(\tau;\dot{\pi}) \,d\tau, 
		\end{align*}
		where \(\beta^{\mathrm{UQR}}(\tau;\dot{\pi}):=-\frac{\E[\dot{\pi}(D) \partial_dF_{Y|D,X}(q_\tau|D,X)]}{f_Y(q_\tau)}\).
	\end{corollaryS}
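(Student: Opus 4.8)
The plan is to treat this as the identification analogue of Corollary \ref{cory:struct.theta.gini}, replacing the appeal to Theorem \ref{thm:struct.theta} by an appeal to Proposition \ref{pron:id.theta.exo}. The Hadamard derivative of \(GC\) at \(F_Y\) has already been computed in the proof of Corollary \ref{cory:struct.theta.gini}, namely \(GC'_{F_Y}(h) = \frac{2}{\mu^2(F_Y)}\int_{0}^{1} \varphi(\tau)\big(-h(q_\tau)/f_Y(q_\tau)\big)\,d\tau\), and I would reuse it verbatim, since its derivation relies only on the Hadamard-differentiability assumptions already in force. Applying Proposition \ref{pron:id.theta.exo} with \(\Gamma = GC\) then gives \(\theta_{GC} = GC'_{F_Y}(h)\) evaluated at the argument \(h(\cdot) = \EEc{\big}{\omega^f(\cdot,D)\,\beta(D,X,\cdot)}{Y=\cdot}\), where \(\omega^f(y,d) = -f_Y(y)\dot{\pi}(d)\).

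For the first displayed equality I would substitute this \(h\) into the derivative formula. Since \(\omega^f(q_\tau,d) = -f_Y(q_\tau)\dot{\pi}(d)\), the inner factor simplifies to \(-h(q_\tau)/f_Y(q_\tau) = \EEc{\big}{\dot{\pi}(D)\,\beta(D,X,q_\tau)}{Y=q_\tau}\), so that \(\theta_{GC} = \frac{2}{\E[Y]^2}\int_{0}^{1} \varphi(\tau)\,\EEc{\big}{\dot{\pi}(D)\,\beta(D,X,q_\tau)}{Y=q_\tau}\,d\tau\), using \(\mu(F_Y)=\E[Y]\). The change of variable \(\tau = F_Y(y)\), valid because \(\supp{Y}\) is compact and \(f_Y>0\) by Assumption \ref{A:regularity}(a), turns the \(d\tau\)-integral into \(\int_{\supp{Y}}\frac{2\varphi(F_Y(y))}{\E[Y]^2}\,\EEc{\big}{\dot{\pi}(D)\,\beta(D,X,y)}{Y=y}\,dF_Y(y)\), which by the tower property equals \(\EE{\big}{\omega^{GC}(Y,D)\,\beta(D,X,Y)}\). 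This is exactly the change-of-variable/iterated-expectation step used at equality (3) in the proof of Corollary \ref{cory:struct.theta.gini}.

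For the second displayed equality I would expand the inner conditional expectation via the factorization \(f_{D,X|Y}(d,x|q_\tau) = f_{Y|D,X}(q_\tau|d,x)\,f_{D,X}(d,x)/f_Y(q_\tau)\). The factor \(f_{Y|D,X}(q_\tau|d,x)\) introduced here cancels the denominator of \(\beta(d,x,q_\tau) = -\partial_d F_{Y|D,X}(q_\tau|d,x)/f_{Y|D,X}(q_\tau|d,x)\), so the integrand reduces to \(-\dot{\pi}(d)\,\partial_d F_{Y|D,X}(q_\tau|d,x)\,f_{D,X}(d,x)/f_Y(q_\tau)\). Integrating over \((d,x)\) gives \(\EEc{\big}{\dot{\pi}(D)\,\beta(D,X,q_\tau)}{Y=q_\tau} = -\E[\dot{\pi}(D)\,\partial_d F_{Y|D,X}(q_\tau|D,X)]/f_Y(q_\tau) = \beta^{\mathrm{UQR}}(\tau;\dot{\pi})\). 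Substituting this identity into the \(\tau\)-integral from the previous paragraph yields the second claimed expression.

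Because the Hadamard derivative is already in hand, the argument is essentially bookkeeping, and I do not anticipate a substantive obstacle. The only points requiring care are the interchange of the \(d\tau\)-integration with the conditional expectation, justified by compactness of \(\supp{Y}\) together with the integrability bound in Assumption \ref{A:regularity}(c), and the Fubini step implicit in expanding the conditional expectation; both are routine under the maintained regularity. The conceptually important line is the final one, which shows that the outcome-conditioned \(\dot{\pi}\)-weighted average of \(\beta\) coincides with the \(\dot{\pi}\)-weighted UQR estimand \(\beta^{\mathrm{UQR}}(\tau;\dot{\pi})\), thereby expressing the Gini MPE as a \(\varphi\)-weighted integral of UQR estimands.
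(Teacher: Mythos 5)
Your proposal is correct and follows essentially the same route as the paper's proof: apply Proposition \ref{pron:id.theta.exo} with the Hadamard derivative of \(GC\) already computed in the proof of Corollary \ref{cory:struct.theta.gini}, use the change of variable \(\tau = F_Y(y)\) plus iterated expectations for the first equality, and use the density factorization \(f_{D,X|Y}=f_{Y|D,X}f_{D,X}/f_Y\) to cancel the denominator of \(\beta\) and recover \(\beta^{\mathrm{UQR}}(\tau;\dot{\pi})\) for the second. The only difference is cosmetic ordering: the paper derives the second equality by expanding the unconditional expectation \(\EE{}{\omega^{GC}(Y,D)\beta(D,X,Y)}\) and changing variables back to \(\tau\), whereas you establish the identity \(\EEc{}{\dot{\pi}(D)\beta(D,X,q_\tau)}{Y=q_\tau}=\beta^{\mathrm{UQR}}(\tau;\dot{\pi})\) directly at the level of the \(\tau\)-integral.
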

	\begin{proof}[\normalfont\bfseries Proof of Corollary \ref{cory:id.exo.theta.gini}]
		Applying Proposition 1, we have
		\begin{align*}
			\theta_{GC}
			&=GC'_{F_Y}\left(\EEc{\big}{\omega^f(\cdot,D)\cdot\beta(D,X,\cdot)}{Y=\cdot}\right)\\
			&=\int_{0}^{1}\EEc{\bigg}{ \frac{2\varphi(\tau)\dot{\pi}(D)}{\E[Y]^2} \cdot\beta(D,X,q_\tau)}{Y=q_\tau}\,d\tau\\
			&=\int_{\supp{Y}}\EEc{\bigg}{\frac{2\varphi(F_Y(y))\dot{\pi}(D)}{\E[Y]^2}\cdot\beta(D,X,y)}{Y=y}\,dF_Y(y)\\
			&=\EE{\bigg}{\frac{2\varphi(F_Y(Y))\dot{\pi}(D)}{\E[Y]^2}\cdot\beta(D,X,Y)}\\
			&=\frac{2}{\E[Y]^2}\int_{\supp{Y}}\varphi(F_Y(y))\iint \,\dot{\pi}(d)\cdot \left(-\frac{\partial_dF_{Y|D,X}(y|d,x)}{f_{Y|D,X}(y|d,x)}\right)\,f_{Y,D,X}(y,d,x)\,dd\,dx\,dy\\
			&=\frac{2}{\E[Y]^2}\int_{\supp{Y}}\varphi(F_Y(y))\int \,\dot{\pi}(d)\cdot \left(-\frac{\partial_dF_{Y|D,X}(y|d,x)}{f_Y(y)}\right)\,dF_{D,X}(d,x)\,dF_Y(y)\\
			&= \frac{2}{\E[Y]^2}\int_{0}^{1} \varphi(\tau)\cdot \left(-\frac{\int\dot{\pi}(d) \partial_dF_{Y|D,X}(q_\tau|d,x)\,dF_{D,X}(d,x)}{f_Y(q_\tau)}\right) \,d\tau, 
		\end{align*}
		where the third and final equalities follow from the change of variable \(\tau=F_Y(y)\). This completes the proof.
	\end{proof}
	
	\begin{corollaryS}[Identification of \(\theta_{GC}\) using the control function]~\\ \label{cory:id.cv.theta.gini}
		For \(\Gamma=GC\), under the assumptions of Proposition 2 and assuming the support \(\supp{Y}\) is compact, we obtain
		\begin{align*}
			\theta_{GC} 
			&= \EE{\Big}{\omega^{GC}(Y,D)\cdot \beta(D,V,X,Y)}\\
			&= \frac{2}{\E[Y]^2}\int_{0}^{1} \varphi(\tau)\cdot \tilde{\beta}^{\mathrm{UQR}}(\tau;\dot{\pi}) \,d\tau.
		\end{align*}
		where \(\tilde{\beta}^{\mathrm{UQR}}(\tau;\dot{\pi}):=-\frac{\E[\dot{\pi}(D) \partial_dF_{Y|D,V,X}(q_\tau|D,V,X)]}{f_Y(q_\tau)}\).
	\end{corollaryS}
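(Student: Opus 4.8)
The plan is to mirror the proof of Corollary \ref{cory:id.exo.theta.gini}, substituting the control-variable identification of Proposition \ref{pron:id.theta.tri} for the conditional-independence identification of Proposition \ref{pron:id.theta.exo}. The Hadamard derivative of $GC$ at $F_Y$ that I need is exactly the one already computed in the proof of Corollary \ref{cory:struct.theta.gini},
\[
GC'_{F_Y}(h)=\frac{2}{\mu^2(F_Y)}\int_{0}^{1}\varphi(\tau)\left(-\frac{h(q_\tau)}{f_Y(q_\tau)}\right)d\tau,
\]
because this object depends only on the functional $GC$ and the status-quo law $F_Y$, not on which identification route is used. First I would invoke Proposition \ref{pron:id.theta.tri} with $\Gamma=GC$ to obtain $\theta_{GC}=GC'_{F_Y}\big(\E[\omega^f(\cdot,D)\,\beta^{\mathrm{CV}}(D,V,X,\cdot)\mid Y=\cdot]\big)$, and then feed $h(q_\tau)=\E[\omega^f(q_\tau,D)\,\beta^{\mathrm{CV}}(D,V,X,q_\tau)\mid Y=q_\tau]$ into the display. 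Since $\omega^f(q_\tau,d)=-f_Y(q_\tau)\dot{\pi}(d)$, the factor $-1/f_Y(q_\tau)$ cancels the $-f_Y(q_\tau)$ in the weight, leaving
\[
\theta_{GC}=\frac{2}{\E[Y]^2}\int_{0}^{1}\varphi(\tau)\,\E\big[\dot{\pi}(D)\,\beta^{\mathrm{CV}}(D,V,X,q_\tau)\mid Y=q_\tau\big]\,d\tau.
\]
A change of variable $\tau=F_Y(y)$ (so $q_\tau=y$ and $d\tau=f_Y(y)\,dy$) turns the $\tau$-integral into an integral against $dF_Y$, which collapses into the single outer expectation $\E[\omega^{GC}(Y,D)\,\beta^{\mathrm{CV}}(D,V,X,Y)]$ with $\omega^{GC}(y,d)=2\varphi(F_Y(y))\dot{\pi}(d)/\E[Y]^2$; this is the first claimed equality.

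For the second equality I would expand the inner conditional expectation against the joint density, using $f_{D,V,X|Y}(d,v,x\mid y)=f_{Y|D,V,X}(y\mid d,v,x)\,f_{D,V,X}(d,v,x)/f_Y(y)$ together with the definition $\beta^{\mathrm{CV}}(d,v,x,y)=-\partial_d F_{Y|D,V,X}(y\mid d,v,x)/f_{Y|D,V,X}(y\mid d,v,x)$. The conditional density $f_{Y|D,V,X}$ then cancels, so the inner term reduces to $-f_Y(y)^{-1}\int \dot{\pi}(d)\,\partial_d F_{Y|D,V,X}(y\mid d,v,x)\,dF_{D,V,X}(d,v,x)$, which at $y=q_\tau$ is precisely $\tilde{\beta}^{\mathrm{UQR}}(\tau;\dot{\pi})$. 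Reinserting this and undoing the change of variable yields $\theta_{GC}=\frac{2}{\E[Y]^2}\int_{0}^{1}\varphi(\tau)\,\tilde{\beta}^{\mathrm{UQR}}(\tau;\dot{\pi})\,d\tau$.

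I do not expect a genuine analytic obstacle here, since the argument is structurally identical to that of Corollary \ref{cory:id.exo.theta.gini}; the only care required is bookkeeping. I must verify that Proposition \ref{pron:id.theta.tri} is in force, so that Lemma \ref{lem:CF} and Assumption \ref{A:regular.hm.cf} have already delivered $\theta_{\mathrm{LASD}}(d,x,y)=\E[\beta^{\mathrm{CV}}(d,V,x,y)\mid D=d,X=x,Y=y]$; I must replace the $(d,x)$-integration of the exogenous case by integration over $(d,v,x)$; and I must apply Fubini and the change of variable $\tau=F_Y(y)$ twice. Compactness of $\supp{Y}$ and the regularity in Assumption \ref{A:regular.hm.cf} guarantee that the interchange of integration and the cancellation of $f_{Y|D,V,X}$ are legitimate.
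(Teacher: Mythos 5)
Your proposal is correct and is exactly the argument the paper intends: the paper's own proof of this corollary simply states that it "follows arguments analogous to Corollary S5" and omits the details, and your writeup carries out precisely that analogy — Proposition 2 in place of Proposition 1, the same Hadamard derivative of $GC$, the change of variable $\tau=F_Y(y)$, and the cancellation of $f_{Y|D,V,X}$ against the conditional density $f_{D,V,X|Y}$. No gaps.
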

	\begin{proof}[\normalfont\bfseries Proof of Corollary \ref{cory:id.cv.theta.gini}]
		The proof follows arguments analogous to Corollary \ref{cory:id.exo.theta.gini} and is therefore omitted.
	\end{proof}
	
\end{appendices}

% *********************************************
% 	Reference
% *********************************************
\singlespacing

%\end{document}

\end{document}